\documentclass[11pt]{article}
\usepackage{geometry}
\usepackage{multirow}
\makeatletter
\newif\ifCLASSOPTIONcompsoc \CLASSOPTIONcompsocfalse
\newif\ifCLASSINFOpdf \CLASSINFOpdftrue
\newcommand{\IEEEdisplaynontitleabstractindextext}{}
\newcommand{\IEEEpeerreviewmaketitle}{}
\newcommand{\IEEEraisesectionheading}[1]{#1}
\newcommand{\appendices}{\appendix}
\newcommand{\IEEEtitleabstractindextext}[1]{}
\makeatother

\usepackage{url}
\usepackage{amsthm}
\usepackage{hyperref}
\usepackage{xcolor}
\usepackage{algorithm}
\usepackage{siunitx}
\usepackage[noend]{algpseudocode}
\usepackage[font=footnotesize,labelfont=bf,labelsep=period]{caption}
\usepackage{amsfonts}
\ifCLASSOPTIONcompsoc
  \usepackage[nocompress]{cite}
\else
  \usepackage{cite}
\fi
\ifCLASSINFOpdf
\else
\fi
\usepackage{tikz}
\usepackage{threeparttable}
\theoremstyle{plain}
\newtheorem{theorem}{Theorem}
\newtheorem{lemma}{Lemma}

\theoremstyle{definition}

\theoremstyle{remark}

\usepackage{amsmath}
\usepackage{float}
\usepackage{booktabs}
\usepackage[utf8]{inputenc}

\usepackage{array}

\usepackage{graphicx}

\DeclareMathOperator*{\argmin}{\mathrm{arg\,min}}

\begin{document}
%
\title{Adaptive Benign Overfitting (ABO): \\
Overparameterized RLS for Online Learning in Non-stationary Time-series
}
%
%
%
%

\author{
Luis Ontaneda Mijares \\
University College London \\
\texttt{luis.ont.mij@gmail.com}
\and
Nick Firoozye \\
Department of Computer Science, University College London \\
\texttt{n.firoozye@ucl.ac.uk}
}

\date{}

\maketitle

\begin{abstract}
Overparameterized models have recently challenged conventional learning theory by exhibiting improved generalization beyond the interpolation limit, a phenomenon known as \emph{benign overfitting}.  
This work introduces \emph{Adaptive Benign Overfitting (ABO)}, extending the recursive least-squares (RLS) framework to this regime through a numerically stable formulation based on orthogonal--triangular updates.  
A QR-based exponentially weighted RLS (QR–EWRLS) algorithm is introduced, combining random Fourier feature mappings with forgetting-factor regularization to enable online adaptation under non-stationary conditions.  
The orthogonal decomposition prevents the numerical divergence associated with covariance-form RLS while retaining adaptability to evolving data distributions.  
Experiments on nonlinear synthetic time series confirm that the proposed approach maintains bounded residuals and stable condition numbers while reproducing the double-descent behavior characteristic of overparameterized models. Applications to forecasting foreign exchange and electricity demand show that ABO is highly accurate (comparable to baseline kernel methods) while achieving speed improvements of between 20\% to 40\%.  The results provide a unified view linking adaptive filtering, kernel approximation, and benign overfitting within a stable online learning framework.
\end{abstract}

\newpage


\IEEEdisplaynontitleabstractindextext

%
\IEEEpeerreviewmaketitle

\IEEEraisesectionheading{\section{Introduction}\label{sec:introduction}}

Adaptive filtering, or equivalently online regression, underpins modern signal processing, control, and sequential prediction.
As modern systems  operate increasingly in high-dimensional and non-stationary environments, classical stability–bias trade-offs no longer retain their relevance; it is possible to interpolate without overfitting.   In this work, we revisit the Recursive Least Squares (RLS) framework through the lens of overparameterization, linking adaptive filtering with the recent theory of benign overfitting, thereby deriving a fast and practical way of doing online inference in large feature models for non-stationary time series.

\subsection{Benign Overfitting}

The phenomenon of \emph{benign overfitting}—where models generalize well despite perfectly interpolating their training data—has reshaped our understanding of generalization in high-dimensional learning.  
The associated \emph{double-descent} behavior, in which test error decreases again beyond the interpolation threshold, was first observed in deep neural networks \cite{zhang2016understanding,belkin2019reconciling,bartlett2020benign} and later formalized for linear and kernel models \cite{hastie2022surprises,mei2022generalization}.  
Even in simple least-squares regression, overparameterization can yield low risk (test-set error
as measured by the MSE) under mild spectral conditions on the covariance matrix.

In linear regression, benign overfitting is often analyzed through the minimum-norm, or ridgeless, estimator
\begin{equation}
  \hat{\beta}
   = \lim_{\lambda \to 0} \argmin_{\beta} \bigl(
    \sum_t \|y_t - X_t \beta\|^2 + \lambda \|\beta\|^2\bigr)
\end{equation}
which is equivalent to
\begin{equation}
  \hat{\beta} = \argmin_{\beta} \|\beta\|^2
  \quad \text{s.t. } y = X\beta
\end{equation}
In the case where \(X\) is a \(N \times D\) Gaussian i.i.d.\ random matrix with \(N/D \to \gamma < 1\) as \(N, D \to \infty\), the test error can decrease as \(D\) increases, provided that the spectrum of \(X^\top X\) remains well-behaved\footnote{with a sizeable number of large singular values and an arbitrarily large number of small, but not vanishing 
singular values}.  
This phenomenon reveals that overparameterization can act as an implicit regularizer, replacing explicit shrinkage with spectral smoothing.

Although the analysis of benign overfitting was first established in static settings, analogous behavior arises naturally in adaptive and online learning.  
Recursive Least Squares (RLS) and its exponentially weighted variant (EWRLS) \cite{haykin2002adaptive,sayed2003fundamentals} continuously update regression coefficients as new data arrive, effectively performing a streaming analogue of ridge regression.  
Their stability and their ability to generalize when overparameterized remain key questions motivating this work.

These findings, in the context of non-stationary time series, may explain why increasing model width, reducing effective condition numbers, improves generalization, particularly when noise directions are broad enough to absorb variance orthogonal to the principal components.

\subsection{From Kernel Regression to Neural Tangent Kernels}

Before introducing neural and random-feature analogues, it is helpful to recall that least-squares estimation can be viewed as kernel regression with a linear kernel.  
Extending this concept to nonlinear feature maps yields kernel methods,
where generalization is governed by the eigenvalue spectrum of the kernel matrix
rather than the raw design matrix~\cite{principe2011kernel,hastie2009elements}.
This kernel perspective forms the foundation for both the \emph{Neural Tangent Kernel (NTK)} and its efficient approximation via \emph{Random Fourier Features (RFFs)}.

\subsection{From Neural Tangent Kernels to Random Features}

The Neural Tangent Kernel (NTK) framework \cite{jacot2018ntk} interprets the gradient-descent dynamics of the training of DNNs 
in the infinite-width limit as kernel regression with a specific data-dependent kernel in a Reproducing Kernel Hilbert Space (RKHS).
The convergence and generalization properties of overparameterized neural networks can thus be analyzed through the spectral properties of the corresponding NTKs.
This viewpoint provides a direct bridge between deep learning and classical kernel theory, explaining why overparameterized neural networks behave like data-dependent kernel machines.

Kernel methods, however, scale poorly with dataset size due to the need to compute and invert large kernel matrices and 
a computationally tractable analogue was introduced by \emph{Random Fourier Features} (RFFs) \cite{rahimi2007random}, which approximate shift-invariant kernels through random sinusoidal bases derived from B\"{o}chner’s theorem.  
Given features \( z(x)\in\mathbb{R}^D \), regression proceeds via
\[
  \min_{\beta}\|y - Z\beta\|^2 + \lambda \|\beta\|^2
\]
where \(Z\) has fixed width independent of sample size.  
RFFs therefore retain much of the expressive power of kernel methods while remaining computationally efficient. The study of RFFs in the overparameterized regime has revealed parallels with NTKs and deep networks, including benign overfitting and double-descent phenomena.

\subsection{Double Descent in RFFs}

Random Fourier Features can be interpreted as two-layer neural networks
with trigonometric activation functions; alternative nonlinearities such
as ReLU and leaky-ReLU are also viable~\cite{sun2018reluRFF}. Their
computational simplicity, together with universal-approximation
guarantees for infinitely wide networks~\cite{cybenko1989approximation,
hornik1991approximation}, makes RFFs an ideal setting for studying
behavior in the interpolating and non-classical regimes. In particular,
RFFs exhibit the same \emph{double-descent} phenomenon observed in
overparameterized neural networks~\cite{hastie2022surprises,
bartlett2020benign}.

To set notation, the minimum-norm interpolator is
\[
\hat{y} = X\hat{\beta},
\qquad
\hat{\beta} = (X^{\top}X)^{\dagger} X^{\top}y
\]

A convenient unifying viewpoint comes from ridge regression:
\begin{align*}
    \hat{\beta}
        &= \lim_{\lambda \to 0^{+}} (X^{\top}X + \lambda I)^{-1} X^{\top} y \\
        &= X^{\dagger} y 
\end{align*}
This limit always returns the Moore--Penrose minimum-norm solution and
therefore covers both regimes:
\begin{itemize}
    \item \textbf{Classical regime ($D \le N$):}
          \(X^\top X\) is nonsingular and
          \(X^\dagger = (X^\top X)^{-1} X^\top\)
    \item \textbf{Overparameterized regime ($D > N$):}
          \(X^\top X\) is singular and the pseudoinverse is required.
\end{itemize}
In both cases the fundamental identity
\[
    (X^\top X)^{\dagger} X^\top = X^{\dagger}
\]
ensures a consistent form of the estimator.

Empirically, RFFs display the characteristic double-descent curve:
\begin{itemize}
  \item training error decreases monotonically until the interpolation
        threshold;
  \item test error follows the classical U-shape, peaking at
        interpolation;
  \item beyond interpolation, test error decreases again as the model
        becomes increasingly overparameterized.
\end{itemize}

\begin{figure}[t]
  \centering
  \includegraphics[scale=0.4]{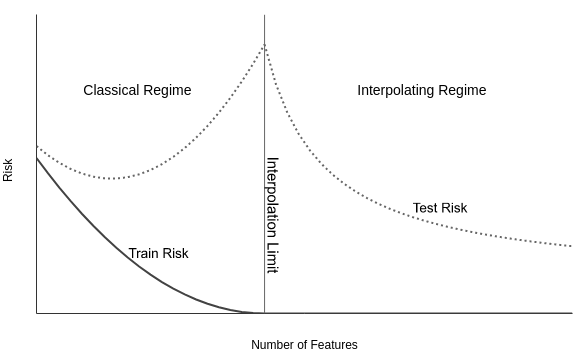}
  \caption{Illustration of the double-descent phenomenon in Random Fourier Features.}
  \label{fig:double-descent}
\end{figure}

Double descent occurs across many random-feature constructions.
Although pseudoinverse computation becomes more expensive in the wide
regime, it also reveals how large feature sets can regularize the
estimator. generalization performance is closely tied to the
\emph{effective condition number} of \(X^\top X\): empirical and
theoretical analyses~\cite{hogg2023featureweighting} show that test
error peaks precisely when this condition number is largest at
interpolation. As the number of features increases, the effective
condition number typically decreases, explaining the improved test error
in the overparameterized regime.

Not all settings exhibit double descent. Asymptotic theories
\cite{bartlett2020benign} require specific spectral properties—a few
dominant singular values and many moderate ones. If these are violated
(for example, by including extremely high deterministic frequencies),
performance may degrade. Feature weighting~\cite{hogg2023featureweighting}
or bandwidth control through the RFF parameter~\(\sigma\) mitigate this
effect: small~\(\sigma\) leads to smoother representations, while
large~\(\sigma\) yields highly oscillatory features that require more
samples to generalize. Thus, \(\sigma\) acts as a bias--variance
hyperparameter.

Finally, the behavior of RFFs can also be understood through their
Fourier-analytic origin.  Let \(f(t)\) be a square-integrable function
with Fourier transform
\begin{align}
  \hat{f}(\omega)
    &= \frac{1}{\sqrt{2\pi}} \int f(t)\,e^{i\omega t}\,dt,\\
  f(t)
    &= \frac{1}{\sqrt{2\pi}} \int \hat{f}(\omega)\,e^{-i\omega t}\,d\omega
\end{align}
To connect this with kernels, note that a shift-invariant kernel
\(k(x_t - x_s)=g(t-s)\) is fully characterized by its lag-domain profile
\(g(\tau)\), which plays the role of a stationary ``signal’’ defined on
time-lags.  By Bochner’s theorem,
\[
g(\tau)=\int p(\omega)\,e^{i\omega\tau}\,d\omega
\]
where \(p(\omega)\) is the spectral density of the kernel. For the case of RBFNets,  $\omega\sim \mathcal{N}(0,\gamma^2)$. We note that in the general case of sampling from RFFs, the choice of $\gamma$ is a form of regularization, preventing unnecessary high-frequency oscillations in the resulting interpolants (see for example \cite{hogg2023featureweighting}).

Random
Fourier Features approximate this integral via Monte Carlo sampling.
Defining
\[
\varphi(x)=
\sqrt{\frac{2}{m}}\,
\bigl[
\cos(\omega_1^\top x + b_1),\,
\ldots,\,
\cos(\omega_m^\top x + b_m)
\bigr]^\top\]
with $\omega_j\sim p$ and $b_j\sim\mathrm{Unif}[0,2\pi]$.

Then
\[
k(x_t -x_s)\approx \varphi(x_t)^\top\varphi(x_s)
\]
so RFFs may be interpreted as a random discretization of the Fourier
representation of the kernel’s lag-domain signal.

\subsection{Online Learning and Non-stationarity}

Many real-world systems, such as financial, physiological, or environmental processes, are inherently \emph{non-stationary}.  
Exponentially Weighted RLS (EWRLS) adapts to such dynamics through a forgetting factor~$\lambda<1$ that down-weights older data, enabling the tracking of time-varying correlations and structural breaks.

This study is motivated by the need for efficient forecasting methods for non-stationary time series commonly encountered in financial markets.  
Classical tools for analysing and decomposing non-stationary signals, widely used in communications, acoustics, and signal processing, include:

\begin{itemize}
  \item \textbf{Short-Time Fourier Transform (STFT):} a windowed Fourier transform,
  \[
  \hat{f}_{\lambda}(t,\omega)
    = \frac{1}{\sqrt{2\pi}} \int_{-\infty}^{\infty}
      \phi_{\lambda}(t-s) f(s) e^{-i\omega s}\,ds,
  \]
  where the window function $\phi_{\lambda}(r) = \phi(\lambda r)$ is often Gaussian with standard deviation~$1/\lambda$, although one-sided exponential kernels $\phi_{\lambda}(r)=e^{-\lambda r}$ for $r>0$ are also feasible ~\cite{TOMAZIC1996141,flandrin1998timefrequency}.  
  \item \textbf{Wavelet Transform:} or multiscale analysis,
  \[
  W_{\psi}[f](a,b)
    = \int_{-\infty}^{\infty} \psi^{*}\!\left(\frac{x-b}{a}\right) f(x)\,dx,
  \]
  where $\psi(\cdot)$ is the mother wavelet of compact support, and ${}^{*}$ denotes complex conjugation.
  
  \item \textbf{Wigner Transform:} or Wigner quasi-probability distribution,
  \[
  W_f(t,\omega)
    = \frac{1}{\sqrt{2\pi}} \int_{-\infty}^{\infty}
      f^{*}(t+s)\,f(t-s)\,e^{-i\omega s}\,ds
  \]
\end{itemize}

Each of these methods introduces localization in time or frequency, either through compactly supported basis functions or windowing. Among them, STFTs are arguably the most commonly used, and are used in the spectral analysis of sound and music (which clearly has an evolving wave-form), among other applications.  We are interested in the exponentially weighted STFT primarily because of its ease of update when new information is added.

\subsection{STFT and EWRLS with Random Fourier Features}To connect Random Fourier Features (RFFs) with short-term Fourier analysis, consider the exponentially weighted kernel term in our objective function:\begin{align*}e^{-\lambda (t-s)} k(x_t - x_s) &= e^{-\lambda\tau} \int p(\omega) e^{i\omega^\top (x_t - x_s)} d\omega\end{align*}where $\tau = t-s$ is the lag. The forgetting factor $e^{-\lambda\tau}$ acts as a one-sided \emph{exponential window} on the lag, modulating each Fourier component of the kernel in the same way as an exponentially weighted Short-Time Fourier Transform (STFT). Substituting this into the exponentially weighted objective yields the formulation:$$\hat{\beta}_t = \argmin_{\beta} \sum_{s \leq t} e^{-\lambda (t-s)} \bigl\| y_s - \varphi(x_s)^\top \beta \bigr\|^2$$
This problem admits the usual online solution via exponentially weighted recursive least-squares (EWRLS) updates. In this framework, the RLS algorithm behaves as a dynamic spectral estimator where the RFF weights $\hat{\beta}_t$ correspond to the instantaneous amplitudes of the signal's spectral components as resolved through the randomized basis.

\subsubsection{Computational Complexity and Spectral Deconvolution}
A critical divergence exists in the computational order of these approaches. Traditional block-based STFT implementations leverage the Fast Fourier Transform (FFT) to achieve a complexity of $O(M \log M)$ per frame, while recursive variants can reach $O(M)$ \cite{tomazic1996stft}. In contrast, the proposed ABO framework achieves a complexity of $O(ND)$ per update, where $N$ is the sliding window length and $D$ is the number of random features. 

Unlike standard covariance-form RLS which scales as $O(D^2)$ due to the maintenance and inversion of the $D \times D$ autocorrelation matrix \cite{sayed2003fundamentals, haykin2002adaptive}, the QR-based formulation in ABO leverages the structural properties of the sliding window to maintain a linear dependence on the feature dimension $D$. Specifically, by utilizing efficient updates for the minimum-norm least-squares solution in rank-deficient settings \cite{STAUB2021125996}, the update process is restricted to the $N$-dimensional subspace of active observations, resulting in $O(ND)$ operations. This allows the algorithm to scale effectively into the overparameterized regime required for benign overfitting.

\subsubsection{Extension Beyond the Time Domain}Unlike the STFT, which is mathematically confined to temporal or spatial dimensions, EWRLS-RFF generalizes spectral analysis to arbitrary high-dimensional feature spaces. Because RFFs provide an explicit mapping for any shift-invariant kernel, this adaptive framework can be applied to abstract inputs such as graph signals or multidimensional sensor arrays. In these regimes, the algorithm tracks localized structures in feature spaces where traditional grid-based FFT methods are fundamentally inapplicable.

\subsection{Stability through QR Updates}

Standard RLS methods using rank-one updates and downdates are known to be numerically unstable~\cite{sayed2003fundamentals,haykin2002adaptive}, depending on forgetting factors. In the case of overparameterized models, the updates and downdates have to be 
altered to consider the singular case (i.e., rank-increasing updates and rank-decreasing downdates). We include the details of update and downdate formulas in Appendix~\ref{appendix:sherman_morrison_version}, just for completeness. Nonetheless, 
in the overparameterized regime, this rank-one formulation of RLS also suffers from numerical instability. This fragility of covariance-form RLS motivates reformulating the recursion in terms of orthogonal-triangular (QR) updates, which maintain numerical stability under sequential adaptation~\cite{apolinario2009qrrls,higham2002accuracy}.

The inverse covariance, or the Moore--Penrose pseudoinverse in the underdetermined regime where \( D < N \), of the Gram matrix \( (X^\top X)^{\dagger} \) is updated recursively using the Sherman--Morrison formula. Although efficient, this formulation is numerically fragile: each rank-one correction involves a subtraction between nearly collinear outer products, which can destroy positive semi-definiteness and effectively double the condition number of the covariance estimate at each iteration~\cite{higham2002accuracy,hammarling2008updating}.

As a result, the covariance matrix may drift away from symmetry or become indefinite under finite-precision arithmetic, particularly in ill-conditioned or overparameterized settings.  
Square-root or QR-based variants address these issues by maintaining an orthogonal and triangular factorization updated through Givens or Householder rotations~\cite{alexander1993method,hammarling2008updating}, ensuring that round-off errors remain bounded.

\subsection{Contributions}
Building on these insights, we propose \emph{Adaptive Benign Overfitting (ABO)}, an adaptive, overparameterized EWRLS framework that integrates Random Fourier Features with exponentially weighted QR updates. Our contributions are twofold:
    \begin{itemize}
    \item We propose a numerically stable recursive scheme using Givens rotations for rank-one updates and downdates. By explicitly maintaining the Moore-Penrose pseudoinverse, the algorithm remains well-defined in rank-deficient regimes, ensuring the minimum-norm interpolant in overparameterized settings.
    \item Empirical evaluations on nonlinear synthetic, financial, and energy time-series demonstrate that the method achieves bounded residuals and numerical stability while successfully recovering the double-descent behavior characteristic of benign overfitting.
    \item  We show that ABO achieves $O(ND)$ complexity per update by leveraging the sliding-window structure and pseudoinverse properties. This linear scaling in feature dimension $D$ (where $D \gg N$) provides a significant improvement over the $O(D^2)$ cost of standard RLS \cite{haykin2002adaptive}, enabling high-dimensional online learning.
\end{itemize}

This formulation bridges the gap between theoretical analyses of benign overfitting and practical recursive algorithms for online learning, offering a unified framework for adaptive modeling in non-stationary environments. We now turn to the formal model and algorithmic formulation, extending Recursive Least Squares (RLS) to the overparameterized and rank-deficient regime while preserving numerical stability under sequential updates.

\subsection{Related Work}

Classical Recursive Least Squares (RLS) algorithms have been extensively studied in adaptive filtering and signal processing \cite{haykin2002adaptive,sayed2003fundamentals}.  
The introduction of exponential weighting improved their ability to track non-stationary signals \cite{apolinario2009qrrls}, although the standard covariance formulation remains prone to numerical instability due to repeated matrix updates, even when using standard ridge penalties\footnote{Standard ridge penalties in EWRLS are typically vanishingly small when the sample size grows, due to ease of update. Consequently their stabilization properties are somewhat limited.}.  
Square-root and QR-based variants were later developed to enhance numerical robustness \cite{alexander1993method,hammarling2008updating,higham2002accuracy}, yet these analyses typically assume full-rank and well-conditioned data.

Addressing the ill-conditioning inherent in high-dimensional adaptive filtering has traditionally relied on explicit regularization, most notably through sparsity constraints. Inspired by the LASSO, Zero-Attracting RLS (ZA-RLS) algorithms and their reweighted variants impose an $\ell_1$-norm penalty (or approximate $\ell_0$ or {\sl counting} ``norm'' penalty)  on the weight vector $w_n$, effectively forcing 'insignificant' coefficients to zero to reduce variance \cite{2011EksiogluTanc,2017HongGaoChen}. Similar stability guarantees have been achieved using projection-based methods, such as the Adaptive Projected Subgradient Method (APSM), which constrains weights to a convex set \cite{YAMADA2004639}.

In contrast to other regularization methods for adaptive filters, our work investigates implicit regularization arising from overparameterization, or {\sl benign overfitting}, \cite{belkin2019reconciling,bartlett2020benign,hastie2022surprises}.  
Rather than artificially constraining the model capacity via sparsity or projections, we expand the feature space (via RFFs, see \cite{rahimi2007random,mei2022generalization,tsigler2020benign,su2023benign} ) beyond the interpolation threshold. We note that while little work has been done in the rank-deficient regime, there has been some examination of both Cholesky and rank-factorization methods in this area \cite{Sentana1999,STAUB2021125996}.

As suggested by recent findings in the 'benign overfitting' literature, our approach allows the 'noise' energy to be distributed harmlessly across orthogonal directions in the high-dimensional space, stabilizing the RLS update without requiring the sparse structural assumptions of ZA-RLS.

The present work bridges these domains by formulating a QR-based exponentially weighted RLS algorithm that remains numerically stable in the overparameterized regime while preserving the recursive structure required for real-time applications.

\newpage

\section{System Model}
\label{sec:system_model}
The limitations of the covariance-updating approach discussed in
Appendix~\ref{appendix:sherman_morrison_version} highlight the need for a numerically
stable recursive formulation that operates reliably in both full-rank and
rank-deficient regimes.  To this end, we reformulate the RLS problem using
an orthogonal–triangular (QR) factorization, which avoids explicit
inversion of Gram matrices and preserves numerical stability under sliding
window updates.

At each time step $t$ we observe a raw covariate $x_t \in \mathbb{R}^d$
and a scalar response $y_t \in \mathbb{R}$.  Random Fourier Features
provide a nonlinear mapping
\[
    z_t = \varphi(x_t) \in \mathbb{R}^D ,
\]
and the regression is performed on the RFF design matrix.  Over a rolling
window of length $N$, the design matrix and response vector are
\[
    Z_t =
    \begin{bmatrix}
        z_{t-N+1}^\top \\
        \vdots \\
        z_t^\top
    \end{bmatrix}
    \in \mathbb{R}^{N \times D}, 
    \qquad
    Y_t =
    \begin{bmatrix}
        y_{t-N+1} \\
        \vdots \\
        y_t
    \end{bmatrix}
    \in \mathbb{R}^N .
\]

Classical least squares requires forming the Gram matrix
$A_t = Z_t^\top Z_t$, which may be singular when $D > N$ and is
numerically unstable to update rank-one.  Instead, QR–RLS maintains an
orthogonal–triangular factorization \label{sec:System}
\begin{equation}
    Z_t = Q_t R_t ,
    \label{eq:qr-factorization}
\end{equation}
where $Q_t \in \mathbb{R}^{N \times N}$ is orthogonal and
$R_t \in \mathbb{R}^{N \times D}$ is upper–trapezoidal.
As new data $(z_{t+1}, y_{t+1})$ arrive, and the oldest row is removed,
the factors $(Q_t, R_t)$ are updated using Givens rotations, avoiding
explicit matrix inversion and ensuring numerical stability in both
full-rank and rank-deficient regimes.

\subsection{Least squares via QR}

The least-squares estimate solves
\[
    \min_{\beta}\;\|Z_t \beta - Y_t\|_2^2 
\]
Using the QR factorization \eqref{eq:qr-factorization} and the
orthogonality of \(Q_t\), the objective becomes
\begin{align}
    \|Z_t \beta - Y_t\|_2^2
        &= \|Q_t^\top(Z_t \beta - Y_t)\|_2^2 \\
        &= \|R_t \beta - Q_t^\top Y_t\|_2^2 
        \label{eq:ls_qr_objective}
\end{align}
Thus the LS problem reduces to solving the triangular system
\[
    R_t \beta = Q_t^\top Y_t 
\]

When \(R_t\) has full column rank (\(D \leq N\)), the solution is obtained
by back-substitution:
\begin{equation}   
    \beta_t = R_t^{-1} Q_t^\top Y_t 
\end{equation}
In the overparameterized regime (\(D > N\)), \(R_t\) is rank-deficient and
the minimum-norm solution is obtained by replacing \(R_t^{-1}\) with its
Moore--Penrose pseudoinverse: 
\begin{equation}
    \beta_t = R_t^{\dagger} Q_t^\top Y_t 
       \label{eq:min_norm_soln}
\end{equation}

This expression holds in both regimes and avoids forming the Gram matrix
\(Z_t^\top Z_t\), ensuring numerical stability.

\newpage

\noindent\textbf{Remark 1 (Minimum-Norm and Benign Overfitting).}\label{rem:benign_overfitting} In the overparameterized regime ($D > N$), Equation~\eqref{eq:min_norm_soln} yields the unique minimum-$\ell_2$-norm interpolant:\begin{equation}\beta_t = \arg\min_{\beta} \|\beta\|_2 \quad \text{s.t.} \quad Z_t \beta = Y_t,\end{equation}equivalent to the Moore-Penrose solution $\beta_t = Z_t^{\dagger} Y_t$. In this setting, generalization is governed by the implicit regularization of the minimum-norm constraint rather than explicit shrinkage. As the null space of $Z_t$ expands, the estimator distributes variance across many tail eigenvalues—a spectral property essential for \emph{benign overfitting}. This mechanism is consistent with the double-descent behavior observed in online learning and justifies the use of stable recursive updates for the pseudoinverse solution.

\subsection{Random Fourier Features}

To introduce nonlinearity while retaining a tractable linear model, we map
each raw observation \(x_t \in \mathbb{R}^d\) into a higher-dimensional
random feature space using \emph{Random Fourier Features} (RFFs)
\cite{rahimi2007random}.

Let \(A \in \mathbb{R}^{d \times D}\) have columns
\(\omega_1,\dots,\omega_D\) drawn independently from \(p(\omega)\),
and let \(b \in \mathbb{R}^D\) contain i.i.d.\ phases sampled from
\(\mathrm{Unif}[0,2\pi]\).  The random feature mapping is
\begin{equation}
    z_t
    = z(x_t)
    = \sqrt{\frac{2}{D}}
      \cos(A^\top x_t + b)
    \in \mathbb{R}^D 
    \label{eq:rff}
\end{equation}
where the cosine is applied elementwise. We adopt this formulation for notational compactness, although many would include $\cos$ and $\sin$, or adopt novel sampling schemes to ensure more efficient representations.\footnote{Improved efficiency is known to be found via {\sl Orthogonal Random Features, Quasi-Monte Carlo Random Features, Structured Orthogonal Random Features and Fast Food} based sampling methods, among others. }

Over a rolling window of length \(N\), the RFF design matrix is
\[
    Z_t
    =
    \begin{bmatrix}
        z_{t-N+1}^\top \\
        \vdots \\
        z_t^\top
    \end{bmatrix}
    \in \mathbb{R}^{N \times D}
\]
which forms the input to the QR--based recursive least-squares algorithm
described in the following sections.  The dimensionality \(D\) is fixed
and chosen independently of the sample size, enabling efficient updates in
both full-rank and overparameterized regimes.

\subsection{QR-RLS Update via Givens Rotations}

\subsubsection{Givens Rotations}

A Givens (plane) rotation \(G(i,j;c,s)\in\mathbb{R}^{N\times N}\) is the
identity matrix except on rows and columns \(i\) and \(j\), where
\[
G(i,j;c,s)_{[i,j],[i,j]}=
\begin{bmatrix}
c & s\\
-s & c
\end{bmatrix},
\qquad c^2 + s^2 = 1
\]
Given a pivot \(x\) and a subdiagonal entry \(y\) in the same column, we
choose
\[
    r = \sqrt{x^2 + y^2}, \qquad
    c = \frac{x}{r}, \qquad
    s = \frac{y}{r}
\]
so that
\[
\begin{bmatrix}
c & s\\ -s & c
\end{bmatrix}
\begin{bmatrix}
x\\ y
\end{bmatrix}
=
\begin{bmatrix}
r\\ 0
\end{bmatrix}
\]
A sequence of such rotations \(\{G_\ell\}\) is applied to eliminate the
subdiagonal entries of the augmented matrix, and their product forms the
orthogonal factor:
\[
    Q_t^\top = G_1^\top G_2^\top \cdots G_k^\top 
\]
This yields the QR relation
\[
    Q_t^\top Z_t = R_t
\]
where \(R_t\) is the updated upper–trapezoidal factor.

\subsubsection{Observation Update}

When a new feature vector \(z_t^\top\) arrives, we first form the
augmented matrix
\[
    \bar{R}_t
    =
    \begin{bmatrix}
        R_{t-1} \\
        z_t^\top
    \end{bmatrix}
    \in \mathbb{R}^{(N+1)\times D}
\]
which corresponds to appending the new row to the windowed design matrix
\(Z_{t-1}\).  This matrix is no longer upper–trapezoidal: its last row
contains subdiagonal entries that must be eliminated.

A sequence of Givens rotations \(G_t\) is applied to zero these
subdiagonal elements:
\begin{equation}
\label{eq:givens_update}
    G_t^\top 
    \begin{bmatrix}
        R_{t-1} \\
        z_t^\top
    \end{bmatrix}
    =
    \begin{bmatrix}
        R_t \\
        0^\top
    \end{bmatrix}
\end{equation}
yielding the updated triangular factor \(R_t \in \mathbb{R}^{N\times D}\).
The corresponding orthogonal factor updates as
\[
    Q_t
    =
    \begin{bmatrix}
        Q_{t-1} & 0 \\
        0 & 1
    \end{bmatrix}
    G_t 
\]
so that the augmented design matrix satisfies
\[
    \begin{bmatrix}
        Z_{t-1} \\
        z_t^\top
    \end{bmatrix}
    = Q_t
      \begin{bmatrix}
          R_t \\
          0^\top
      \end{bmatrix}
\]

The least–squares estimate is then
\[
    \beta_t = R_t^{\dagger} Q_t^\top Y_t 
\]
Instead of recomputing \(R_t^{\dagger}\) from scratch, the pseudoinverse is
updated through the Givens transformations:
\[
    R_t^{\dagger}
    =
    \left( G_t^\top
    \begin{bmatrix}
        R_{t-1} \\
        z_t^\top
    \end{bmatrix}
    \right)^{\dagger}
    =
    \begin{bmatrix}
        R_{t-1} \\
        z_t^\top
    \end{bmatrix}^{\dagger}
    G_t 
\]
which follows from \((GA)^\dagger = A^\dagger G^\top\) for any orthogonal
matrix \(G\).  This recursion avoids explicit inversion and preserves
numerical stability during online updates.

\subsubsection{Pseudoinverse Update (Greville/Cline) and Link to QR-RLS}

From the previous section we have the row-augmented matrix $\bar{R}_t$,
together with its Moore--Penrose pseudoinverse $\bar{R}_t^\dagger$.

Define:
\begin{align*}
c &= (I - R_{t-1}^\dagger R_{t-1})\, z_t
\\
h &= z_t^\top R_{t-1}^\dagger 
\\
c^\dagger &= \dfrac{c}{\|c\|_2^2}
\end{align*}
Then the Greville/Cline update for the pseudoinverse is
\begin{equation}
\label{eq:pseudoinv_update}
\bar{R}_t^\dagger =
\begin{bmatrix}
R_{t-1}^\dagger - b h & b
\end{bmatrix},
\qquad
b =
\begin{cases}
\hspace{.5cm}c^\dagger & \text{if } c \neq 0,\\[1.25ex]
\dfrac{R_{t-1}^\dagger h^\top}{1 + h h^\top} & \text{if } c = 0
\end{cases}
\end{equation}
which covers the full-column-space and dependent-column cases \cite{cline1964representations}.
Unlike standard QRD-RLS, which is typically restricted to overdetermined systems ($D \le N$), 
ABO utilizes Greville/Cline updates to maintain a stable minimum-norm solution in the $D > N$ regime.

Define the transformed right-hand side as
\begin{equation}
    w_t = Q_t^\top Y_t
\label{eq:transformed_rhs}
\end{equation}

which can be updated recursively via
\begin{equation*}
    w_t = G_t^\top
    \begin{bmatrix}
        w_{t-1} \\ y_t
    \end{bmatrix}
\end{equation*}

By orthogonality of $Q_t$, we have
\begin{equation*}
    \big(G_t^\top 
    \begin{bmatrix}
        R_{t-1}\\
        z^\top_t
    \end{bmatrix}\big)^{\dagger}
    = R_t^{\dagger}
    \;\;\Longleftrightarrow\;\;
    \begin{bmatrix}
        R_{t-1}\\
        z^\top_t
    \end{bmatrix}^{\dagger} G_t
    = R_t^{\dagger}
\end{equation*}
    
Hence, we apply $G_t$ to both the triangular factor and the transformed 
right-hand side. By orthogonality, $G_t^\top G_t = I$, so
\begin{align*}
    \beta_t 
    &= \begin{bmatrix}
        R_{t-1} \\ z^\top_t
      \end{bmatrix}^{\dagger} G_t G_t^\top 
      \begin{bmatrix}
        w_{t-1} \\ y_t
      \end{bmatrix} \\
    &= \begin{bmatrix}
        R_{t-1} \\ z^\top_t
      \end{bmatrix}^{\dagger} 
      \begin{bmatrix}
        w_{t-1} \\ y_{t}
      \end{bmatrix}
\end{align*}

\subsubsection{Weight Update}

\begin{align}
\label{eq:weights_update}
    \beta_{t} &= R_{t}^\dagger Q_{t}^\top y_{t} \\
     &= \begin{bmatrix}
        R_{t-1}^\dagger - b z^\top_t R_{t-1}^\dagger  & b
    \end{bmatrix} G_t G_t^\top  \begin{bmatrix}
        w_{t-1} \\
        y_t
    \end{bmatrix} \\
    &= R_{t-1}^\dagger w_{t-1} - b z^\top_t R_{t-1}^\dagger w_{t-1} + b y_t \\
    &= \beta_{t-1} + b(y_t - z^\top_t\beta_{t-1})
\end{align}

The vector $b$ plays the role
of the Kalman gain; its expression depends on whether the new row $z_t$
increases the rank of $R_{t-1}$ ($c \neq 0$) or lies in its column
space ($c = 0$).

\subsubsection{Observation Downdate}\label{sec:downdate}

To enforce a rolling window, the contribution of the oldest observation
must be removed from the QR factors.  Let $G_{t-N}$ denote the Givens
rotation that reverses the orthogonal transformation previously applied to
the first row of $Q_{t-1}$ \cite{golub1996matrix}.  Partition
$Q_{t-1}$ as
\[
Q_{t-1} =
\begin{bmatrix}
    q_{t-N}^\top \\
    Q_{t-1}^{(2:N)}
\end{bmatrix}
\]
where $q_{t-N}^\top$ is the row corresponding to the discarded
observation.  The downdate rotation $G_{t-N}$ is chosen so that
\[
G_{t-N}^\top q_{t-N} = \alpha e_1,
\qquad \alpha \in \{\pm1\}
\]
which yields
\[
Q_{t-1} G_{t-N}
=
\begin{bmatrix}
    \alpha & 0 \\
    0      & Q_t
\end{bmatrix}
\]
If $\alpha=-1$, we flip the sign of $G_{t-N}$ to preserve orientation.

Applying $G_{t-N}^\top$ to the triangular factor restores the first row of
$R_{t-1}$ to its pre-rotation value $z_{t-N}^\top$.  Hence
\[
G_{t-N}^\top R_{t-1}
=
\begin{bmatrix}
    z_{t-N}^\top \\
    R_t
\end{bmatrix}
\]
Subtracting the rank-one term $e_1 z_{t-N}^\top$ removes this restored row:
\[
G_{t-N}^\top R_{t-1} - e_1 z_{t-N}^\top
=
\begin{bmatrix}
    0 \\ R_t
\end{bmatrix}
\]
After discarding the zero row, the remaining block is the downdated
triangular factor $R_t$.

For later use, define the intermediate matrix
\[
\hat{R}_t := R_{t-1} - G_{t-N} e_1 z_{t-N}^\top
\]
so that
\[
G_{t-N}^\top \hat{R}_t = 
\begin{bmatrix}
    0 \\ R_t
\end{bmatrix}
\]

Since $G_{t-N}$ is orthogonal, the generalized inverse identity
$(GA)^\dagger = A^\dagger G^\top$ applies \cite{cami1979generalized}, giving
\[
(G_{t-N}^\top \hat{R}_t)^\dagger
= \hat{R}_t^\dagger G_{t-N}
\]
This relation will be used to express the downdated pseudoinverse and to
connect the downdate with the subsequent weight update.

\subsubsection{Pseudoinverse Downdate via Generalized Sum Formulas}

To apply the generalized inverse sum formulas of
\cite{cami1979generalized,cline1964representations}, introduce
\begin{align*}
    h &= z_{t-N}^\top R_{t-1}^{\dagger}, \qquad
    k = R_{t-1}^{\dagger} G_{t-N} e_1,\\
    u &= (I - R_{t-1} R_{t-1}^{\dagger})G_{t-N} e_1, \qquad
    v = z_{t-N}^\top (I - R_{t-1}^{\dagger} R_{t-1})
\end{align*}
For a downdate, the removed row $z_{t-N}^\top$ lies in both the row and
column spaces of $R_{t-1}$, and therefore
\[
u = 0, \qquad v = 0
\]
\textbf{Singular case}:
\begin{equation}
    \label{eq:pseudoinv_downdate}
\hat{R}_t^\dagger 
= R_{t-1}^{\dagger}
  - k k^\dagger R_{t-1}^{\dagger}
  - R_{t-1}^{\dagger} h^\dagger h
  + (k^\dagger R_{t-1}^{\dagger} h^\dagger)\, k h
\end{equation}
\textbf{Nonsingular case}:
\begin{equation}
\label{eq:inv_downdate}
\hat{R}_t^\dagger
= R_{t-1}^{-1}
  + \frac{
      R_{t-1}^{-1} G_{t-N} e_1 z_{t-N}^\top R_{t-1}^{-1}
    }{
      1 - z_{t-N}^\top R_{t-1}^{-1} G_{t-N} e_1
    }
\end{equation}

In both cases, the downdated pseudoinverse $R_t^\dagger$ is obtained by
right-multiplying by the orthogonal rotation:
\[
\begin{bmatrix}
    0 & R_t^\dagger
\end{bmatrix}
=
\hat{R}_t^\dagger G_{t-N}
\]
This yields a fully recursive pseudoinverse downdate compatible with both
full-rank and rank-deficient regimes.

\subsubsection{Weight Downdate}

To compute the iterative weight downdate, we begin by noting that the update of $Y_{t-1}$ requires removing its first element, analogously to the removal of the first observation $z_{t-N}$ in the feature matrix. Since $y_{t-N}$ is a scalar, this operation can be expressed as
\begin{equation*}
    G_{t-N}^\top Q_{t-1}^\top
    \begin{bmatrix}
        y_{t-N} \\[0.3em] Y_{t}
    \end{bmatrix}
    -
    \begin{bmatrix}
        y_{t-N} \\[0.3em] 0
    \end{bmatrix}
    =
    \begin{bmatrix}
        0 \\[0.3em] Q_t^\top Y_t
    \end{bmatrix}
\end{equation*}
where $y_{t-N} = e_1^\top Y_{t-1}$ denotes the first entry of the vector $Y_{t-1}$.  
This relation can be rearranged as


\begin{equation*}
    \begin{bmatrix}
    0 \\ Q_t^\top Y_t
\end{bmatrix} = G_{t-N}^\top Q_{t-1}^\top Y_{t-1} - e_1 y_{t-N}
\end{equation*}

Consequently, the complete downdating expression for the regression weights becomes
\begin{align*}
    \beta_t 
    &= R_t^{\dagger} Q_t^\top Y_t \\
    &= \bigl(R_{t-1} - G_{t-N} e_1 z_{t-N}^\top\bigr)^{\dagger} 
       G_{t-N} \bigl(G_{t-N}^\top Q_{t-1}^\top Y_{t-1} - e_1 y_{t-N}\bigr) \\
    &= \bigl(R_{t-1} - G_{t-N} e_1 z_{t-N}^\top\bigr)^{\dagger} 
       \bigl(Q_{t-1}^\top Y_{t-1} - G_{t-N} e_1 y_{t-N}\bigr)
\end{align*}

In the final step, the appropriate pseudoinverse or inverse update formula is substituted, depending on the regime in which the system operates—that is, whether the feature matrix is overdetermined or underdetermined.

\hspace{-.6cm}
\textbf{Singular case}:

To apply the generalized pseudoinverse downdate identity, we require the
standard range conditions \cite{cami1979generalized}:
\[
    G_{t-N} e_1 \in \operatorname{Range}(R_{t-1}), 
    \qquad
    z_{t-N} \in \operatorname{Range}(R_{t-1}^\top)
\]
Since both $G_{t-N} e_1$ and $z_{t-N}$ are vectors, these conditions ensure
that the corresponding oblique projections vanish:
\[
    (I - R_{t-1} R_{t-1}^{\dagger})\, G_{t-N} e_1 = 0,
    \qquad
    z_{t-N}^\top (I - R_{t-1}^{\dagger} R_{t-1}) = 0^\top
\]

Under these assumptions, the rank of the augmented system is preserved:
\[
    \rho(R_{t-1}) = \rho(\hat R_{t})
\]
so the downdate does not alter the column space of $R_{t-1}$.

Consider the block matrix
\[
    \hat R_{t-1} =
    \begin{bmatrix}
        R_{t-1} & G_{t-N}e_1 \\[0.3em]
        z_{t-N}^\top & D
    \end{bmatrix}
\]
with Schur complement $D = G_{t-N}e_1 R_{t-1}^\dagger z_{t-N}^\top$.  
Following \cite{hartwig1976rankfactorization}, we set $D = 1$ to avoid
introducing additional scale factors.  
This directly yields the normalization
\[
    1 = z_{t-N}^\top  R_{t-1}^{\dagger} G_{t-N} e_1
\]

Because $k k^\dagger k = k$, the matrix $k k^\dagger$ acts as the
orthogonal projector onto the subspace spanned by the vector \(k = R_{t-1}^{\dagger} G_{t-N} e_1\).

Since the model interpolates the data within the window, the response vector
$Y_{t-1}$ lies in the row space of $Z_{t-1}$.  
Equivalently, the estimator $\beta_{t-1}$ satisfies
\[
    z_{t-N}^\top \beta_{t-1} = y_{t-N}
\]

We now derive the downdated coefficient vector $\beta_t$.

\begin{equation}
\label{eq:beta_pseudoinv_downdate}
\begin{split}
\beta_t &= R^\dagger_t Q_t^\top Y_t \\ &= (R_{t-1}^{\dagger} - kk^\dagger R_{t-1}^{\dagger} - R_{t-1}^{\dagger} h^\dagger h + (k^\dagger R_{t-1}^{\dagger} h^\dagger)\,kh) \\ &\hspace{.5cm}(Q_{t-1}^\top Y_{t-1} - G_{t-N} e_1 y_{t-N}) \\ &= \beta_{t-1} - kk^\dagger \beta_{t-1} - R_{t-1}^{\dagger} h^\dagger z_{t-N}^\top \beta_{t-1} \\ & \hspace{.5cm} + (k^\dagger R_{t-1}^{\dagger} h^\dagger) k z_{t-N}^\top \beta_{t-1} \\ & \hspace{.5cm} - k y_{t-N} + kk^\dagger k y_{t-N} \\ & \hspace{.5cm} + R_{t-1}^{\dagger} h^\dagger z_{t-N}^\top R_{t-1}^{\dagger} G_{t-N} e_1 y_{t-N} \\ &\hspace{.5cm} - (k^\dagger R_{t-1}^{\dagger} h^\dagger) k z_{t-N}^\top R_{t-1}^{\dagger} G_{t-N} e_1 y_{t-N} \\ &= \beta_{t-1} - kk^\dagger \beta_{t-1} \\ & \hspace{.5cm} + ( R_{t-1}^{\dagger} h^\dagger - (k^\dagger R_{t-1}^{\dagger} h^\dagger) k)(y_{t-N} - z_{t-N}^\top \beta_{t-1}) \\ & \hspace{.5cm} - k y_{t-N} + k y_{t-N} \\ \beta_t &= \beta_{t-1} - kk^\dagger \beta_{t-1} 
\end{split}
\end{equation}

The update is therefore the orthogonal projection of $\beta_{t-1}$ onto the
subspace orthogonal to $k$:
\[
    \beta_t = (I - k k^\dagger)\,\beta_{t-1}
\]
Since $k k^\dagger$ is the projector onto $\operatorname{Range}(k)$, the
downdated parameter satisfies
\[
    k^\top \beta_t = 0
\]
ensuring that information associated with the discarded row is completely
removed from the estimate.

\hspace{-.6cm}
\textbf{Non Singular case}:
\begin{equation}
\label{eq:beta_inv_downdate}
\begin{split}
    \beta_t &= R^\dagger_t Q_{t}^\top Y_t 
     \\ 
     &= (R^\dagger_{t-1} + \frac{R^\dagger_{t-1} G_{t-N} e_1 z_{t-N}^\top R^\dagger_{t-1}}{(1 - z_{t-N}^\top R^\dagger_{t-1} G_{t-N} e_1)}) \\ &\hspace{.5cm}(Q_{t-1}^\top Y_{t-1} - G_{t-N} e_1 y_{t-N})
     \\
     &= \beta_{t-1} + \frac{R^\dagger_{t-1} G_{t-N} e_1 z_{t-N}^\top \beta_{t-1}}{(1 - z_{t-N}^\top R^\dagger_{t-1} G_{t-N} e_1)}  \\ & \hspace{.5cm} - y_{t-N} R^\dagger_{t-1} G_{t-N} e_1 (1 + \frac{z_{t-N}^\top R^\dagger_{t-1} G_{t-N} e_1}{(1 - z_{t-N}^\top R^\dagger_{t-1} G_{t-N} e_1)})  
     \\
     &= \beta_{t-1} + \frac{R^\dagger_{t-1} G_{t-N} e_1 z_{t-N}^\top \beta_{t-1}}{(1 - z_{t-N}^\top R^\dagger_{t-1} G_{t-N} e_1)} \\ &\hspace{.5cm} - \frac{y_{t-N} R^\dagger_{t-1} G_{t-N} e_1}{(1 - z_{t-N}^\top R^\dagger_{t-1} G_{t-N} e_1)} 
     \\
     \beta_{t} &= \beta_{t-1} - \frac{R^\dagger_{t-1} G_{t-N} e_1}{(1 - z_{t-N}^\top R^\dagger_{t-1} G_{t-N} e_1)} (y_{t-N} - z_{t-N}^\top \beta_{t-1})
\end{split}
\end{equation}

In this formulation, the Kalman gain occupies its standard position as the multiplicative factor of the innovation term $(y_{t-N} - z_{t-N}^\top \beta_{t-1})$, thereby updating the estimate in proportion to the prediction error.

The overall recursive procedure is summarized in Algorithm~\ref{alg:abo_qr}, which integrates
the exponentially weighted updates and QR-based orthogonalization described above.

\subsubsection{Forgetting factor and Regularization}\label{sec:forgetting_factor}

The introduction of a forgetting factor $\lambda \in (0,1]$ in the RLS cost function allows the algorithm to adapt to non-stationary signals by exponentially down-weighting past observations. This mechanism also acts as an implicit form of regularization, improving the numerical conditioning of the covariance matrix \cite{sayed2003fundamentals,apolinario2009qrrls}.

The previous derivations correspond to the special case $\lambda = 1$, where all samples are equally weighted. In practice, however, the forgetting factor is incorporated through a diagonal weighting matrix
\[
\Lambda_t = \operatorname{diag}(\sqrt{\lambda^{N-1}}, \sqrt{\lambda^{N-2}}, \ldots, \sqrt{\lambda^{0}}),
\]
leading to the exponentially weighted least-squares problem
\[
\min_{\beta} \, \| \Lambda_t (Z_t \beta_t - Y_t) \|_2^2.
\]
which using the QR factorization it is equal to
\[
\beta = R_{t-1}^\dagger Q_t^\top \Lambda_t Y_t
\]

To obtain a recursive QR update with exponential forgetting, assume that at
time $t-1$
\[
\Lambda_{t-1} Z_{t-1} = Q_{t-1} R_{t-1}
\]
Incorporating a new observation $z_t$ with forgetting factor
$\lambda \in (0,1]$ yields
\[
\Lambda_t Z_t
=
\begin{bmatrix}
\sqrt{\lambda}\,\Lambda_{t-1} & 0 \\
0 & 1
\end{bmatrix}
\begin{bmatrix}
Z_{t-1} \\
z_t^\top
\end{bmatrix}
=
\begin{bmatrix}
\sqrt{\lambda}\, Q_{t-1} R_{t-1} \\
z_t^\top
\end{bmatrix}
\]
Factoring out the orthogonal block $\operatorname{diag}(Q_{t-1},1)$ gives
\begin{equation}
\Lambda_t Z_t
=
\begin{bmatrix}
Q_{t-1} & 0 \\
0 & 1
\end{bmatrix}
\begin{bmatrix}
\sqrt{\lambda}\, R_{t-1} \\
z_t^\top
\end{bmatrix}
\label{eq:qr_recursive}
\end{equation}
so the update reduces to computing the QR factorization of the stacked
matrix in~\eqref{eq:qr_recursive}.

The weighted response vector updates analogously as
\[
\Lambda_t Y_t
=
\begin{bmatrix}
\sqrt{\lambda}\,\Lambda_{t-1} Y_{t-1} \\
y_t
\end{bmatrix}
\]

Accordingly, the recursive factorization update becomes
\[
R_t =  
\begin{bmatrix}
    \sqrt{\lambda}\, R_{t-1} \\
    z_t^\top
\end{bmatrix}
\]
When updating the pseudoinverse matrix, the forgetting factor is absorbed by scaling
\[
R_{t-1}^\dagger \leftarrow \frac{1}{\sqrt{\lambda}} R_{t-1}^\dagger
\]
prior to performing the downdate step in~\eqref{eq:pseudoinv_update}.  
This adjustment ensures that the inverse recursively reflects the exponential weighting. A detailed proof is provided in Appendix~\ref{appendix:forgetting_factor_proof}.

Regularization can also be introduced through \emph{diagonal loading}, which modifies the normal equations to
\[
(Z^\top Z + \delta I)\beta = Z^\top Y
\]
thereby converting a singular or ill-conditioned system into a well-posed one \cite{principe2011kernel,haykin2002adaptive}. 
While diagonal loading (ridge) provides stability by shifting the spectrum, ABO achieves stability through the orthogonal decomposition itself, allowing the model to utilize the full null space for variance absorption without the bias introduced by $\delta I$. In our setting, we deliberately avoid such explicit regularization, since our goal is to study over-parameterized regimes where the model is intentionally allowed to interpolate the data, which is entirely disallowed in the case of ridge-regularization.

\subsection{QR--EWRLS Algorithm}
\label{subsec:algorithm}

We now summarize the proposed QR-based exponentially weighted recursive least-squares
(QR--EWRLS) procedure. Algorithm~\ref{alg:abo_qr} presents the complete update cycle.
The following theorem establishes that, under idealized arithmetic, the algorithm computes
the minimum-norm exponentially weighted sliding-window least-squares solution.

\newpage
\captionsetup{type=algorithm}
\captionof{algorithm}{Adaptive Benign Overfitting (ABO) via QR-EWRLS}
\label{alg:abo_qr}
\footnotesize
\begin{algorithmic}[1]
\Require Stream $\{(x_t, y_t)\}$, window size $N$, forgetting factor $\lambda$, RFF dimension $D$ 
\Ensure Online predictions $\{\hat{y}_t\}$, coefficient vectors $\{\beta_t\}$ 

\State \textbf{Initialization:} 
\State \quad Compute initial $Z_0 \in \mathbb{R}^{N \times D}$ and $Y_0 \in \mathbb{R}^N$ 
\State \quad Perform QR factorization: $Z_0 = Q_0 R_0$ \Comment{[(\ref{eq:qr-factorization})]}
\State \quad Solve $R_0 \beta_0 = Q_0^\top Y_0$ for minimum-norm $\beta_0$ \Comment{[(\ref{eq:min_norm_soln})]}

\For{$t = N+1, N+2, \dots$}
    \State \textbf{Step 1: RFF Mapping}
    \State \quad $z_t \gets \sqrt{\frac{2}{D}}\cos(A^\top x_t + b)$ \Comment{[(\ref{eq:rff})]}
    
    \State \textbf{Step 2: Non-stationary Scaling}
    \State \quad $R_{t-1} \gets \sqrt{\lambda} R_{t-1}$, \quad $Y_{t-1} \gets \sqrt{\lambda} Y_{t-1}$ \Comment{[\S \ref{sec:forgetting_factor}]}
    
    \State \textbf{Step 3: Observation Update (Givens Rotations)} [(\ref{eq:givens_update})]
    \State \quad $\bar{R}_t \gets \begin{bmatrix} R_{t-1} \\ z_t^\top \end{bmatrix}$, \quad  $Y_t \gets \begin{bmatrix} Y_{t-1} \\ y_t \end{bmatrix}$ 
    \State \quad Apply $G_t$ such that $G_t^\top \bar{R}_t \gets \begin{bmatrix} R_t \\ \mathbf{0}^\top \end{bmatrix}$
    \State \quad $Q_t \gets \begin{bmatrix} Q_{t-1} & 0 \\ 0 & 1 \end{bmatrix} G_t$

    \State \textbf{Step 4: Pseudoinverse Update}
    \quad \State $h \gets z_t^\top R_{t-1}^\dagger$
    \If {$D \le N$} \Comment{Classical Regime} [(\ref{eq:pseudoinv_update})]
        \quad \State $\bar{R}_t^\dagger \gets
        \begin{bmatrix}
        R_{t-1}^\dagger - \dfrac{R_{t-1}^\dagger h^\top}{1 + h h^\top} h & \dfrac{R_{t-1}^\dagger h^\top}{1 + h h^\top}
        \end{bmatrix} 
        $ 
    \quad \Else \Comment{Overparameterized Regime} [(\ref{eq:pseudoinv_update})]
        \quad \State $\bar{R}_t^\dagger \gets
        \begin{bmatrix}
        R_{t-1}^\dagger - c^\dagger h & c^\dagger
        \end{bmatrix} $\Comment{Min-norm solution} 
    \quad \EndIf
    \quad \State $\bar{R}_t^\dagger G_t
    \gets R_t^{\dagger}$
    
    \State \textbf{Step 5: Regression Weights}
    \If {$D \le N$} \Comment{Classical Regime} [(\ref{eq:weights_update})]
        \quad \State  $\beta_t \gets \beta_{t-1} + \dfrac{R_{t-1}^\dagger h^\top}{1 + h h^\top}(y_t - z^\top_t\beta_{t-1})$ 
    \quad \Else \Comment{Overparameterized Regime} [(\ref{eq:weights_update})]
        \quad \State $\beta_t \gets \beta_{t-1} + c^\dagger (y_t - z^\top_t\beta_{t-1})$
    \quad \EndIf
    
    \State \textbf{Step 6: Windowed Downdate}
    \State \quad Identify rotation $G_{t-N}$ to isolate the first row of $Q_{t-1}$ 
    \State \quad Update $R_t$ and $Y_t$ to remove $(z_{t-N}, y_{t-N})$ \hspace{-.2cm} \Comment{[\S \ref{sec:downdate}]}
    \State \quad Maintain $R_t$ as $N \times D$ upper-trapezoidal

    \State \textbf{Step 7: Pseudoinverse Downdate}
    \quad \State $h \gets z_{t-N}^\top R_{t-1}^{\dagger}, \qquad
    k \gets R_{t-1}^{\dagger} G_{t-N} e_1$

    \If {$D \le N$} \Comment{Classical Regime} [(\ref{eq:inv_downdate})]
        \quad \State $\hat{R}_t^\dagger
    \gets R_{t-1}^{-1}
    + \dfrac{
      k z_{t-N}^\top R_{t-1}^{-1}
    }{
      1 - z_{t-N}^\top k
    } $ 
    \quad \Else \Comment{Overparameterized Regime} [(\ref{eq:pseudoinv_downdate})]
        \quad \State $\hat{R}_t^\dagger 
    \gets R_{t-1}^{\dagger}
  - k k^\dagger R_{t-1}^{\dagger}
  - R_{t-1}^{\dagger} h^\dagger h
  + (k^\dagger R_{t-1}^{\dagger} h^\dagger)\, k h $ 
    \quad \EndIf
    \quad \State $
    \begin{bmatrix}
    0 & R_t^\dagger
\end{bmatrix} \gets \hat{R}_t^\dagger G_{t-N}$
    \State \textbf{Step 8: Weight Downdate}
    \If {$D \le N$} \Comment{Classical Regime}[(\ref{eq:beta_inv_downdate})]
    \quad \State $\beta_{t} \gets \beta_{t-1} - \dfrac{k}{(1 - z_{t-N}^\top k)} (y_{t-N} - z_{t-N}^\top \beta_{t-1})$
    \quad \Else \Comment{Overparameterized Regime} [(\ref{eq:beta_pseudoinv_downdate})]
    \quad \State $\beta_t \gets \beta_{t-1} - kk^\dagger \beta_{t-1}$
    \quad \EndIf

    \State \textbf{Step 9: Prediction}
    \State \quad $\hat{y}_t \gets z_t^\top \beta_t$ \Comment{Predict next value} 
\EndFor

\end{algorithmic}
\newpage

\begin{theorem}[Consistency with minimum-norm exponentially weighted LS]
\label{thm:abo_consistency}
Let $\hat{\beta}_t$ denote the unique minimum-$\ell_2$-norm solution of the exponentially weighted
sliding-window least-squares problem at time $t$,
\begin{equation}
\hat{\beta}_t
= \arg\min_{\beta}\ \|\beta\|_2
\quad \text{s.t.}\quad
\beta \in \arg\min_{\gamma}
\sum_{i=0}^{N-1} \lambda^i \|y_{t-i} - z_{t-i}^\top \gamma\|_2^2 
\end{equation}
Let $\beta_t$ be the estimate produced by Algorithm~1.
Assume exact arithmetic, admissible rank-preserving updates and downdates, and initialization
$\beta_0 = \hat{\beta}_0$.
Then, for all $t \ge 0$, the algorithmic iterate coincides with the minimum-norm solution,
\[
\beta_t = \hat{\beta}_t .
\]
\end{theorem}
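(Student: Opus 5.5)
We argue by induction on $t$, with an invariant stronger than the claim. For every $t$, the stored triple $(Q_t,R_t,R_t^\dagger)$ satisfies $\Lambda_t Z_t = Q_t R_t$ with $Q_t$ orthogonal, together with $w_t=Q_t^\top\Lambda_t Y_t$. The iterate is then $\beta_t=R_t^\dagger w_t$. Given this invariant, the theorem follows from two facts. First, the weighted problem is ordinary least squares for the pair $(\Lambda_t Z_t,\Lambda_t Y_t)$, whose minimum-norm solution is $(\Lambda_t Z_t)^\dagger\Lambda_t Y_t$ (the Moore--Penrose characterization recalled in Remark~1). Second, $(Q_tR_t)^\dagger=R_t^\dagger Q_t^\top$ for orthogonal $Q_t$, the identity $(GA)^\dagger=A^\dagger G^\top$ already used in the update and downdate sections. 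Together these give $\hat\beta_t=R_t^\dagger Q_t^\top\Lambda_tY_t=\beta_t$. The base case is exactly the initialization $\beta_0=\hat\beta_0$ with $Z_0=Q_0R_0$.

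For the inductive step we follow one cycle of Algorithm~\ref{alg:abo_qr} in three stages.

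\textbf{Scaling (Step 2).} Equation~\eqref{eq:qr_recursive} shows that replacing $R_{t-1}$ by $\sqrt\lambda R_{t-1}$ and $Y_{t-1}$ by $\sqrt\lambda Y_{t-1}$ preserves the invariant for the reweighted window. The pseudoinverse scales as $(\sqrt\lambda R)^\dagger=\lambda^{-1/2}R^\dagger$, and $\beta_{t-1}$ is unchanged, since the scalar cancels in $R^\dagger w$.

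\textbf{Observation update (Steps 3--5).} The Givens relation~\eqref{eq:givens_update} and the $Q_t$ recursion give the factorization of the augmented weighted matrix. The Greville/Cline formula~\eqref{eq:pseudoinv_update} (taken from \cite{cline1964representations}) gives $\bar R_t^\dagger$ exactly, in both the case $c\ne0$ and the case $c=0$. Right multiplication by $G_t$ then gives $R_t^\dagger$. The computation~\eqref{eq:weights_update} shows that $\beta_t=\bar R_t^\dagger[w_{t-1};y_t]$, so after the update the iterate equals the minimum-norm solution on the $N+1$ rows.

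\textbf{Downdate (Steps 6--8).} The rotation $G_{t-N}$ isolates the first row, and subtracting $e_1z_{t-N}^\top$ and $e_1y_{t-N}$ yields the factorization of the shortened window, with $w$ transformed consistently. In the nonsingular regime, \eqref{eq:inv_downdate} is Sherman--Morrison and is exact whenever $1-z_{t-N}^\top k\neq0$; the admissibility hypothesis supplies this. Then \eqref{eq:beta_inv_downdate} gives $\beta_t$. In the singular regime, the admissible rank-preserving assumption supplies the range conditions $u=0$ and $v=0$. Under these conditions the generalized sum formula~\eqref{eq:pseudoinv_downdate} of \cite{cami1979generalized} is an exact expression for $\hat R_t^\dagger$, and \eqref{eq:beta_pseudoinv_downdate} yields $\beta_t$.

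The main obstacle is the singular downdate. The derivation of \eqref{eq:beta_pseudoinv_downdate} uses the interpolation identity $z_{t-N}^\top\beta_{t-1}=y_{t-N}$ and the normalization $z_{t-N}^\top k=1$. We would justify both from the inductive hypothesis, rather than treat them as extra assumptions.

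\textbf{Interpolation identity.} With $D>N$ and the weighted design of full row rank (part of admissibility), the minimum-norm solution interpolates every row. The weights $\lambda^i>0$ do not affect this, because interpolating the weighted system is the same as interpolating the unweighted one.

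\textbf{Normalization.} We would first try the following. Let $q$ be the first row of $Q_{t-1}$, so that $G_{t-N}e_1=\pm q$ up to orientation. Then $R_{t-1}^\dagger G_{t-N}e_1=(\Lambda Z)^\dagger$ applied to $e_1$. Consequently $z_{t-N}^\top k$ is the $(1,1)$ entry of the projector $(\Lambda Z)(\Lambda Z)^\dagger$, up to the weight $\sqrt{\lambda^{N-1}}$ on that row. This entry equals $1$ when $\Lambda Z$ has full row rank. The weight factor must be tracked carefully here: the removed row is the weighted row $\sqrt{\lambda^{N-1}}z_{t-N}^\top$, not the raw $z_{t-N}^\top$. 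Once these are verified, the downdated $\beta_t$ equals $R_t^\dagger Q_t^\top\Lambda_tY_t$, which closes the induction.
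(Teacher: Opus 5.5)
Your proof follows the paper's own induction: Greville/Cline for the row update, and a rotation plus a generalized-inverse formula for the row removal. You add a stronger invariant ($\Lambda_tZ_t=Q_tR_t$, $w_t=Q_t^\top\Lambda_tY_t$, $\beta_t=R_t^\dagger w_t$), and you derive the interpolation identity and $z_{t-N}^\top k=1$ from full row rank rather than assuming them. That is more than the paper's sketch does. Your point about the weighted row is also right, with one correction: at Step~6 the oldest row has been scaled $N$ times, so it is $\sqrt{\lambda^{N}}\,z_{t-N}$ (as in the paper's Lemma), not $\sqrt{\lambda^{N-1}}\,z_{t-N}$. Below, $z_{t-N},y_{t-N}$ denote the weighted quantities.

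\textbf{The gap: the nonsingular downdate.} Your claim that \eqref{eq:inv_downdate} is an exact Sherman--Morrison formula for $\hat R_t^\dagger$ does not hold. Let $q=G_{t-N}e_1$, a unit vector. The first row of $G_{t-N}^\top R_{t-1}$ gives $z_{t-N}^\top=q^\top R_{t-1}$, hence
\[
1-z_{t-N}^\top k \;=\; q^\top\bigl(I-R_{t-1}R_{t-1}^\dagger\bigr)q \;=\; \|u\|_2^2 .
\]
So whenever the denominator is nonzero you have $u\neq0$, and there the Sherman--Morrison form is not the Moore--Penrose inverse; it is valid only for $u=v=0$. In particular, $R_{t-1}$ is never square and invertible in an admissible classical downdate. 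Assume full column rank and use $\hat R_t^\top\hat R_t=R_{t-1}^\top R_{t-1}-z_{t-N}z_{t-N}^\top$ with $\hat R_t^\top=R_{t-1}^\top-z_{t-N}q^\top$. Then
\[
\hat R_t^\dagger \;=\; R_{t-1}^\dagger+\frac{k\,h}{1-z_{t-N}^\top k}-\frac{k\,q^\top}{1-z_{t-N}^\top k},
\]
which differs from \eqref{eq:inv_downdate} unless $z_{t-N}=0$, because $k=0$ iff $R_{t-1}^\top q=z_{t-N}=0$.

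\textbf{Why your invariant still survives.} The conclusion holds, but for a reason you need to supply.
\begin{enumerate}
\item The vector $q^\top w_{t-1}=e_1^\top Q_{t-1}Q_{t-1}^\top\Lambda Y_{t-1}$ is the first (weighted) response $y_{t-N}$. Hence the extra term $k\,q^\top/(1-z_{t-N}^\top k)$ vanishes on the right-hand side $w_{t-1}-q\,y_{t-N}$, and \eqref{eq:beta_inv_downdate} still gives the exact $\beta_t$.
\item Since $q^\top G_{t-N}=e_1^\top$, the extra term adds only $k\,e_1^\top/(1-z_{t-N}^\top k)$ to $\hat R_t^\dagger G_{t-N}$. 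That is, it affects only the first column, which Step~7 discards, so the stored $R_t^\dagger$ is still exact.
\end{enumerate}
The paper's sketch avoids this issue only by arguing on the normal equations, where Sherman--Morrison is exact for $(R_{t-1}^\top R_{t-1}-z_{t-N}z_{t-N}^\top)^{-1}$. The recursion, however, propagates $R_t^\dagger$, so your factor-level invariant is the one actually needed.

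\textbf{The singular regime.} The same identity shows that $u=0$ is equivalent to $z_{t-N}^\top k=1$. This is exactly the Campbell--Meyer case that \eqref{eq:pseudoinv_downdate} requires, so your claim there is correct. In that case the rank drops by one ($N+1\to N$). The downdate is therefore rank-decreasing, just as the update is rank-increasing ($c\neq0$), contrary to the paper's remark that rank is preserved. ``Admissible rank-preserving'' should be read as ``every window of $N$ and $N+1$ weighted rows has full row rank,'' which is what your argument actually uses.
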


\begin{proof}
The proof proceeds by induction on $t$.

\textbf{Base case ($t=0$):}
Algorithm~1 initializes $\beta_0$ via an explicit batch QR factorization of the initial
windowed design matrix. The resulting pseudoinverse solution therefore coincides with the
minimum-$\ell_2$-norm solution $\hat{\beta}_0$.

\textbf{Inductive step:}
Assume $\beta_{t-1} = \hat{\beta}_{t-1}$.
The update from $t-1$ to $t$ consists of two operations.

First, scaling by $\sqrt{\lambda}$ and row augmentation by $(z_t,y_t)$ corresponds exactly to
adding the weighted observation $\lambda^0 z_t z_t^\top$ to the exponentially weighted normal
equations.
By the properties of the Greville pseudoinverse update, the resulting intermediate estimate
remains the minimum-norm solution of the augmented system.

Second, the downdate step applies an orthogonal transformation to isolate the contribution of the
oldest weighted observation $\lambda^{N} z_{t-N} z_{t-N}^\top$, followed by a pseudoinverse downdate.
Under the admissibility assumptions, this operation is algebraically equivalent to removing that
term from the normal equations while preserving the minimum-norm property.

Since these two steps exactly implement the addition and removal of terms in the exponentially
weighted sliding-window objective, the updated estimate satisfies
$\beta_t = \hat{\beta}_t$.
\end{proof}

\section{Experiments}

\subsection{Synthetic Time Series Generation}

Following \cite{kantz2004nonlinear}, we simulate a nonlinear autoregressive process defined as
\begin{equation}
    x_t = \frac{2x_{t-1}}{1 + 0.8x_{t-1}^2} + \varepsilon_t,
    \quad \varepsilon_t \sim \mathcal{U}(-1, 1),
\end{equation}
initialized with \(x_0 \sim \mathcal{U}(-1, 1)\). 
This process exhibits bounded chaotic dynamics and is widely used to evaluate nonlinear forecasting methods.
From the simulated sequence \(\{x_t\}_{t=1}^{10500}\), we construct lagged feature vectors
\begin{equation}
    \mathbf{x}_t = (x_{t-1}, x_{t-2}, \dots, x_{t-7}),
\end{equation}
which serve as predictors for the current value \(x_t\).
We then transform these 7 lagged variables into \emph{Random Fourier Features (RFF)} of the desired dimension, providing a nonlinear feature mapping that preserves the inner-product structure induced by a shift-invariant kernel.
The model is therefore trained to forecast \(x_t\) from its transformed RFF representation of the 7 previous lags.

\subsection{Data Preprocessing}

All features were standardized to ensure comparability across scales and 
to approximate homoscedasticity, i.e., a constant variance across time 
\cite{gujarati2003basic}. Standardization was performed using
\begin{equation}
    z_{score} = \frac{x - \mu}{\sigma},
\end{equation}
where $\mu$ and $\sigma$ denote the mean and standard deviation of each 
feature, respectively. After transformation, all features satisfy 
$\mu = 0$ and $\sigma = 1$. 

The standardization was applied recursively as new observations were introduced, 
ensuring consistent scaling throughout the data stream \cite{hastie2009elements}.

\subsection{Stability Analysis}
\label{sec:stability}

To assess the numerical stability of the proposed update, we conduct two complementary analyses: (i) empirical residual tracking and (ii) conditioning of the system matrix.

\subsubsection{Empirical Residual Errors}

The empirical residuals quantify the cumulative numerical error introduced by sequential updates.  
At each iteration \( t \), we define the residual as
\[
r_t = \| y_t - z_t^\top \beta_t \|_2,
\]
and track the mean and standard deviation of \( \{r_t\}_{t=1}^N \) over time.  
A stable update exhibits bounded residuals that do not diverge as \( t \) increases.

We report the following diagnostics:
\begin{itemize}
    \item \textbf{Mean residual:} \( \bar{r} = \frac{1}{T} \sum_t r_t \)
    \item \textbf{Residual variance:} \( \sigma_r^2 = \frac{1}{T-1}\sum_t (r_t - \bar{r})^2 \)
\end{itemize}
Low variance in \( r_t \) indicates robustness to numerical drift.

\subsubsection{Condition Number Evolution}

The condition number of the design or correlation matrix provides a proxy for numerical sensitivity.  
At each step we compute
\[
\kappa_t = \operatorname{cond}(R_t) = \frac{\sigma_{\max}(R_t)}{\sigma_{\min}(R_t)},
\]
where \( \sigma_{\max} \) and \( \sigma_{\min} \) are the largest and smallest singular values of \( R_t \).

A well-conditioned matrix satisfies \( \kappa_t \ll 10^{10} \), while a sharp growth in \( \kappa_t \) signals loss of orthogonality or ill-conditioning.

\subsection{Results without Forgetting Factor ($\lambda = 1$)}

For this experiment, we fix the forgetting factor to $\lambda = 1$, corresponding to the classical RLS formulation with equal weighting of all past samples. 
The remaining hyperparameters are:
\begin{table}[h!]
\centering
\caption{Experiment hyperparameters.}
\label{tab:exp_hyperparams}
\begin{tabular}{lc}
\toprule
\textbf{Parameter} & \textbf{Value} \\
\midrule
Number of updates & 10000 \\
Window size & 20 \\
Kernel variance (RFF) & 1 \\
\bottomrule
\end{tabular}
\end{table}

This setup serves as a baseline to evaluate the stability and residual dynamics of our proposed update without exponential weighting.

Table~\ref{tab:results_lambda1} summarizes the mean and variance of the residuals and condition numbers across varying Random Fourier Feature (RFF) dimensions, which we denote by $D$.  In this and each of the following tables we print the entries in bold to indicate values that 
minimize the test residual variance and MSE across all feature dimensions.
As expected, the model exhibits increasing numerical instability around the interpolation regime, with a sharp rise in both residual variance and condition number for intermediate feature sizes (e.g., $D=16$), before stabilizing as the model becomes over-parameterized, with the 
optimal value being an edge case, with likely continued monotonic decreasing behaviour.


\begin{table}[H]
\centering
\begin{threeparttable}
\caption{Mean and variance of residuals for $\lambda = 1$}
\label{tab:results_lambda1}
\begin{tabular}{c cc cc}
\toprule
\textbf{$\log_2(D)$} & \multicolumn{2}{c}{\textbf{Train Residual}} & \multicolumn{2}{c}{\textbf{Test Residual}} \\
\cmidrule(lr){2-3} \cmidrule(lr){4-5}
 & Mean & Variance & Mean & Variance \\
\midrule
1  & 0.8121 & 0.9808 & 1.0008 & 1.4941 \\
2  & 0.7061 & 0.8256 & 1.0966 & 2.0205 \\
3  & 0.4556 & 0.4147 & 1.3103 & 3.5529 \\
4  & 0.0813 & 0.0211 & 3.1712 & 69.2871 \\
I* & 0.0000 & 0.0000 & 819.825 & $1.2 \times 10^{9}$\\
5  & 0.0000 & 0.0000 & 1.5744 & 8.2697 \\
6  & 0.0000 & 0.0000 & 0.7797 & 1.2136 \\
7  & 0.0000 & 0.0000 & 0.6197 & 0.6885 \\
8  & 0.0000 & 0.0000 & 0.6013 & 0.6142 \\
9  & 0.0000 & 0.0000 & 0.5749 & 0.5378 \\
10 & 0.0000 & 0.0000 & 0.5534 & 0.4957 \\
11 & 0.0000 & 0.0000 & 0.5547 & 0.4982 \\
12 & 0.0000 & 0.0000 & 0.5452 & 0.4809 \\
13 & 0.0000 & 0.0000 & 0.5447 & 0.4806 \\
14 & 0.0000 & 0.0000 & \textbf{0.5420} & \textbf{0.4738} \\
\bottomrule
\end{tabular}
\begin{tablenotes}
\footnotesize
\item[$\ast$] Interpolation limit corresponds to $D = N = 20$, i.e.\ $\log_2(20) \approx 4.32$.
\end{tablenotes}

\end{threeparttable}
\end{table}

\subsection{Results with Forgetting Factor ($\lambda = 0.9$)}

We now introduce a forgetting factor $\lambda = 0.9$, corresponding to an effective memory length
\[
N_{\text{eff}} = \frac{1}{1 - \lambda} = 10
\]
This exponentially down-weights older observations to enable adaptation to non-stationary dynamics \cite{haykin2002adaptive,sayed2003fundamentals}.  
All remaining parameters are identical to the previous experiment.

Table~\ref{tab:results_lambda09} shows that the residual statistics for $\lambda = 0.9$ are nearly identical to those obtained with $\lambda = 1$.  
In the overparameterized regime ($D \gg N$), the solution is governed by the implicit minimum-norm bias of the QR-based pseudoinverse, rendering exponential reweighting largely irrelevant for steady-state predictive performance.

\begin{table}[H]
\centering
\caption{Mean and variance of residuals for $\lambda = 0.9$}
\label{tab:results_lambda09}
\begin{tabular}{c cc cc}
\toprule
\textbf{$\log_2(D)$} & \multicolumn{2}{c}{\textbf{Train Residual}} & \multicolumn{2}{c}{\textbf{Test Residual}} \\
\cmidrule(lr){2-3} \cmidrule(lr){4-5}
 & Mean & Variance & Mean & Variance \\
\midrule
1  & 0.6267 & 0.6399 & 0.9899 & 1.5632 \\
2  & 0.4337 & 0.3549 & 1.1232 & 2.2689 \\
3  & 0.1727 & 0.0756 & 1.3308 & 3.8850 \\
4  & 0.0135 & 0.0008 & 3.3350 & 73.7217 \\
I* & 0.0000 & 0.0000 & 78534.6 & $1.02 \times 10^{13}$\\
5  & 0.0000 & 0.0000 & 1.5744 & 8.2697 \\
6  & 0.0000 & 0.0000 & 0.7797 & 1.2136 \\
7  & 0.0000 & 0.0000 & 0.6197 & 0.6885 \\
8  & 0.0000 & 0.0000 & 0.6013 & 0.6142 \\
9  & 0.0000 & 0.0000 & 0.5749 & 0.5378 \\
10 & 0.0000 & 0.0000 & 0.5534 & 0.4957 \\
11 & 0.0000 & 0.0000 & 0.5547 & 0.4982 \\
12 & 0.0000 & 0.0000 & 0.5452 & 0.4809 \\
13 & 0.0000 & 0.0000 & 0.5447 & 0.4806 \\
14 & 0.0000 & 0.0000 & \textbf{0.5420} & \textbf{0.4738} \\
\bottomrule
\end{tabular}
\end{table}

\subsection{Condition Number Analysis}

Table~\ref{tab:condnum_lambda1} reports the condition number statistics for the case without forgetting factor ($\lambda = 1$).
As the number of features increases and approaches the interpolation limit ($D \approx N$), the condition number grows substantially, indicating worsening numerical conditioning of the system matrix.
This behavior aligns with the known sensitivity of least-squares updates near the interpolation threshold, where small perturbations in the data can lead to large variations in the estimated parameters \cite{higham2002accuracy,hammarling2008updating}.

\begin{table}[H]
\centering
\caption{Mean and variance of the condition number for $\lambda = 1$}
\label{tab:condnum_lambda1}
\begin{tabular}{c cc}
\toprule
\textbf{$\log_2(D)$} & \textbf{Mean} & \textbf{Variance} \\
\midrule
1  & 1.2801 & 0.0387 \\
2  & 1.7820 & 0.0966 \\
3  & 4.0124 & 1.1380 \\
4  & 21.5274 & 127.6520 \\
I* & $\infty$ & NA \\
5  & 15.4765 & 50.3692 \\
6  & 7.8396 & 12.6857 \\
7  & 5.9778 & 7.5213 \\
8  & 5.4786 & 6.5071 \\
9  & 5.3213 & 6.7122 \\
10 & 5.1483 & 6.1914 \\
11 & 5.1902 & 6.3840 \\
12 & 5.0775 & 6.2375 \\
13 & 5.0659 & 6.0952 \\
14 & 5.0563 & 6.1619 \\
\bottomrule
\end{tabular}
\end{table}

\newpage

\subsection{Double Descent}

In this plot, the model exhibits a pronounced increase in test error as the random feature dimension $D$ approaches the interpolation threshold, followed by a decrease in the overparameterized regime.  
This behavior is characteristic of the double-descent phenomenon \cite{belkin2019reconciling,hastie2022surprises}, reflecting the transition from underparameterization to exact interpolation.

The interpolation threshold is explicitly sampled in this experiment.  
The dashed vertical line indicates the theoretical interpolation point $I^\ast \approx 4.32$, corresponding to the effective equality between the number of random features and the window size.  
The observed peak in test error occurs in close proximity to this threshold, confirming that the degradation in generalization performance is associated with the interpolation regime.

\begin{figure}[H]
    \centering
    \includegraphics[width=1.0\linewidth]{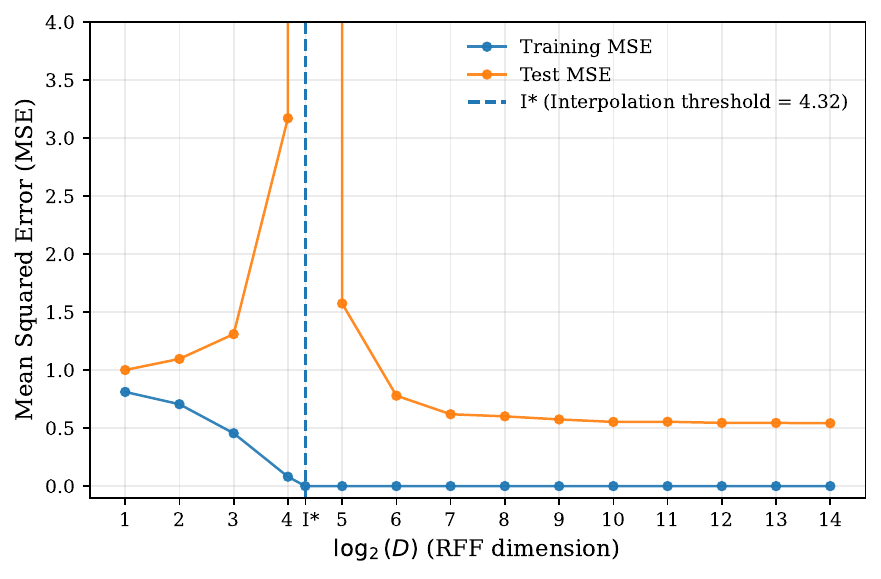}
    \caption{Double-descent behavior of test error}
    \label{fig:placeholder}
\end{figure}

\subsection{Computation Speed}
Computation time was evaluated using Google Benchmark (v1.9.4) linked against optimized BLAS and LAPACK routines under single-threaded execution. Benchmarking was performed on a 13th~Gen Intel\textsuperscript{\textregistered} Core\texttrademark~i7-1370P @ 1.90\,GHz with 32\,GB RAM (Windows~11, WSL2 Ubuntu~22.04~LTS), with CPU frequency scaling and ASLR disabled to ensure timing reproducibility. Table~\ref{tab:abo-bench-cpu} reports the mean, standard deviation, and coefficient of variation (CV) for 1,000 sequential predict--update steps across ten repetitions.

The empirical results show that the aggregate runtime increases with the RFF dimension $D$, ranging from \SI{6.11}{\milli\second} at $D=2^1$ to approximately \SI{11.97}{\second} at $D=2^{14}$. The scaling remains approximately linear with respect to $D$ for the majority of the range, which is consistent with the theoretical $O(ND)$ complexity per update step of the QR-based formulation. While the coefficient of variation remains low at the boundaries of the test range, a transient increase in variance is observed between $D=2^9$ and $D=2^{13}$ (peaking at a CV of \SI{36.64}{\percent}), likely due to memory hierarchy effects or cache misses as the matrix dimensions scale. Nevertheless, the return to a low CV of \SI{4.20}{\percent} at $D=2^{14}$ confirms the \textbf{numerical stability} and \textbf{predictable computational performance} of the proposed recursive implementation at scale.

\begin{table}[H]
\centering
\caption{ABO RLS update+predict runtime vs. RFF dimension $D$ for 1000 sequential updates (Google Benchmark; 10 repetitions).}
\label{tab:abo-bench-cpu}
\begin{tabular}{r S[table-format=5.2] S[table-format=3.2] S[table-format=2.2]}
\toprule
$\log_2(D)$ & {Mean CPU (ms)} & {Std (ms)} & {CV (\%)} \\
\midrule
1  & 6.11    & 0.19   & 3.14  \\
2  & 6.25    & 0.05   & 0.79  \\
3  & 7.59    & 0.30   & 3.95  \\
4  & 9.22    & 0.18   & 1.99  \\
5  & 13.25   & 0.19   & 1.46  \\
6  & 19.79   & 0.44   & 2.22  \\
7  & 36.00   & 2.11   & 5.86  \\
8  & 64.30   & 1.14   & 1.77  \\
9  & 438.02  & 160.49 & 36.64 \\
10 & 887.13  & 292.02 & 32.92 \\
11 & 1495.53 & 290.62 & 19.43 \\
12 & 2699.60 & 492.79 & 18.25 \\
13 & 5508.04 & 832.17 & 15.11 \\
14 & 11971.30 & 502.58 & 4.20 \\
\bottomrule
\end{tabular}
\end{table}

\section{Results and Baseline Comparisons}

We evaluate the proposed Adaptive Benign Overfitting method (ABO/ QR--EWRLS) on real-world non-stationary time-series data.

All experiments follow a strictly online prequential evaluation protocol. We first perform hyperparameter selection using eight validation folds, followed by performance assessment on five disjoint test folds.

For each fold, the model is initialized with a batch whose size equals the chosen window length, after which updates and predictions are carried out sequentially, one observation at a time.

\begin{figure}[H]
\centering

\begin{tikzpicture}[
  x=0.55cm,y=0.75cm,
  warm/.style={draw, fill=gray!25, minimum height=5mm},
  val/.style ={draw, fill=teal!50, minimum height=5mm},
  test/.style={draw, fill=red!50,  minimum height=5mm},
  lab/.style ={font=\small},
  >=latex
]

\def\W{1.5}
\def\V{2}
\def\T{2}
\def\S{2}

\begin{scope}[shift={(4.5,.6)}, scale=0.7]
\draw[fill=teal!50] (0.2,2.4) rectangle (0.8,3.0);
\node[anchor=west] at (1.1,2.7) {\small Validation};

\draw[fill=red!50] (0.2,1.6) rectangle (0.8,2.2);
\node[anchor=west] at (1.1,1.9) {\small Testing};

\draw[fill=gray!20] (0.2,0.8) rectangle (0.8,1.4);
\node[anchor=west] at (1.1,1.1) {\small Batch};
\end{scope}

\draw[->] (-5,3.2) -- (9,3.2) node[midway, above] {Time};

\foreach \k in {0,1,2} {
  \draw[warm] (\W+2*\k - 6.5, 2-\k) rectangle ++(\W,0.35);
  \draw[val]  (\W+2*\k - 6, 2-\k) rectangle ++(\V,0.35);
}

\foreach \k in {0,1,2} {
  \pgfmathsetmacro{\x}{\k*\S}
  \draw[warm] (\W+2*\k - .5, -1-\k) rectangle ++(\W,0.35);
  \draw[test] (\W+2*\k, -1-\k) rectangle ++(\T,0.35);
}

\end{tikzpicture}

\caption{Walk-forward rolling validation with overlapping windows and strictly disjoint test windows, each preceded by batch initialization}
\label{fig:rolling_windows}

\end{figure}
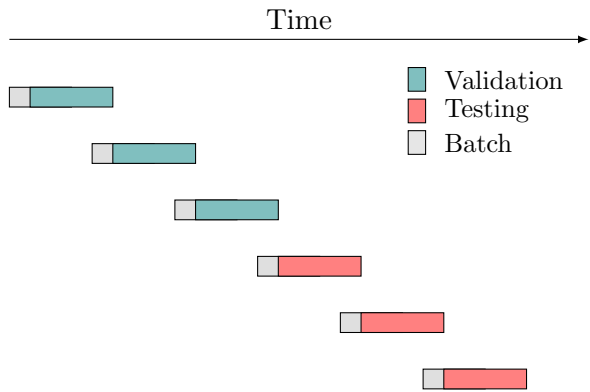

\subsection{Baseline Performance on Real-World Data}

For comparison, we report results for the windowed QR-decomposition-based recursive least-squares (QRD-RLS) algorithm \cite{PAN1989109} and the windowed RBF kernel recursive least-squares method \cite{1661394}. Both methods are well-established in the adaptive filtering literature and are widely used as numerically stable baselines in windowed and online learning settings.

\subsubsection{EUR/USD High-Frequency Forecasting}

The EUR/USD dataset consists of high-frequency foreign exchange data spanning from 24/11/2025 to 02/01/2026, sampled at a one-minute resolution, resulting in approximately 16{,}800 sequential observations. The data are sourced from the Dukas Copy Market Data repository.\footnote{\url{https://www.dukascopy.com/swiss/english/marketwatch/historical/}}

Experiments are conducted in a strictly online setting using a rolling-fold evaluation protocol. Each fold consists of 960 observations for validation and 1920 observations for testing, corresponding to approximately 16 hours of market activity per fold. Model inputs are constructed from lagged observations, and learning is performed using an effective windowed update scheme.

Hyperparameters are selected using Optuna over 8 validation folds \cite{akiba2019optuna}. Final performance is reported on 5 disjoint test folds that are not used during tuning. All reported results are averaged across test folds.

\subsubsection{Electricity Load Forecasting}

We additionally evaluate the proposed method on the Electricity Load Diagrams dataset \cite{electricityloaddiagrams20112014_321}, which contains electricity consumption measurements from the Portuguese power grid over the period 2011--2014, recorded at a 15-minute sampling frequency.

Experiments are conducted under a strictly online evaluation protocol using rolling folds with 672 observations for validation and 1344 for testing, corresponding to approximately one week of data per fold. The same rolling-fold structure is used for both validation and testing, with disjoint segments within each fold. Lagged inputs and a fixed-size windowed learning scheme are employed, with hyperparameters selected using Optuna on the validation segments and final performance reported on the corresponding test segments.

Compared to the high-frequency financial setting, this dataset exhibits pronounced diurnal and weekly seasonality and substantially lower noise levels, providing a complementary benchmark for assessing both predictive performance and numerical stability.

\subsection{Hyperparameter Selection}

Hyperparameters were selected using Bayesian optimization implemented via Optuna \cite{akiba2019optuna}. 
For each method, the search space and number of trials were fixed \emph{a priori} and kept identical across all folds. 
Optimization was performed on a validation set distinct from the test data, and the final reported results correspond to the best validation configuration evaluated on held-out test folds.

All models are evaluated on the same prediction task, with an identical forecasting start index, ensuring that performance differences arise solely from model structure and learning dynamics.

For both datasets, the lag order is fixed to $L=20$ for all models. 
For ABO and KRLS-RBF, the window size and kernel bandwidth are tuned, while for QRD-RLS only the window size is optimized, reflecting the absence of a kernel parameter in linear RLS. 
The corresponding Optuna search spaces are summarized in Table~\ref{tab:optuna}.

\begin{table}[H]
\centering
\caption{Optuna hyperparameter search spaces.}
\label{tab:optuna}
\begin{tabular}{lcccc}
\hline
\textbf{Dataset} & \textbf{Model} & \textbf{Folds} & \textbf{$L$} & \textbf{Tuned} \\
\hline
\multirow{3}{*}{EUR/USD (1-min)}
& ABO       & $\{0,1,...,8\}$ & 20 & $W,\sigma$ \\
& KRLS-RBF  & $\{0,1,...,8\}$ & 20 & $W,\sigma$ \\
& QRD-RLS   & $\{0,1,...,8\}$ & 20 & $W$ \\

\midrule
\multirow{3}{*}{Electricity (15-min)}
& ABO       & $\{0,1,...,8\}$ & 20 & $W,\sigma$ \\
& KRLS-RBF  & $\{0,1,...,8\}$ & 20 & $W,\sigma$ \\
& QRD-RLS   & $\{0,1,...,8\}$ & 20 & $W$ \\
\hline
\end{tabular}
\end{table}

\section{Results}

\subsection{Experimental Setup}
Experiments are conducted with a fixed lag order $L=20$ and a forgetting factor $\lambda = 1$. Kernel hyperparameters ($\sigma$) and window sizes ($W$) are optimized via Optuna. For numerical stability, QRD-RLS and KRLS-RBF utilize a diagonal regularization parameter of $10^{-2}$. Unless otherwise specified, the random Fourier feature (RFF) dimension for ABO is fixed at $D=8192$ ($\log_2 m = 13$).

\subsection{EUR/USD Forecasting Results}
The EUR/USD dataset represents a challenging task due to high noise levels and weak temporal structure. Performance metrics are reported in Table~\ref{tab:eurusd_results}.

In this regime, nonlinear representations are critical; both ABO and KRLS-RBF significantly outperform the linear QRD-RLS baseline in terms of residual mean squared error (ResMSE) and residual variance (ResVAR). ABO achieves predictive performance comparable to KRLS-RBF while maintaining stable online updates without an explicit kernel representation.

\begin{table}[H]
\centering
\caption{Average EUR/USD forecasting performance and relative update time ($L=20, D=8192$).}
\label{tab:eurusd_results}
\begin{tabular}{lccccc}
\hline
\textbf{Model} & \textbf{$W$} & \textbf{$\sigma$} & \textbf{ResMSE} & \textbf{ResVAR} & \textbf{Time ($\times$ABO)} \\
\hline
ABO (ours) & 21  & 8.0 & 1.3587 & 1331.63 & 1.00 \\
KRLS-RBF       & 421 & 0.31 & 1.3559 & 1335.34 & 1.70 \\
QRD-RLS        & 272 & NA  & 2.3294 & 2494.01 & 0.02 \\
\hline
\end{tabular}
\end{table}

\subsection{Electricity Load Forecasting Results}
In contrast to the financial data, the electricity load series exhibits stronger temporal structure and lower noise. Results are summarized in Table~\ref{tab:electricity_results}.

In this structured setting, QRD-RLS achieves the lowest ResMSE and ResVAR, highlighting the efficacy of linear adaptive filtering. ABO remains competitive with KRLS-RBF in predictive accuracy but offers a superior computational profile by avoiding explicit kernel evaluations.

\begin{table}[H]
\centering
\caption{Average electricity load forecasting performance and relative update time ($L=20, D=8192$).}
\label{tab:electricity_results}
\begin{tabular}{lccccc}
\hline
\textbf{Model} & \textbf{$W$} & \textbf{$\sigma$} & \textbf{ResMSE} & \textbf{ResVAR} & \textbf{Time ($\times$ABO)} \\
\hline
ABO (ours) & 21  & 6.5 & 1.0096 & 10.92 & 1.00 \\
KRLS-RBF       & 761 & 0.32 & 1.0110 & 10.87 & 1.28 \\
QRD-RLS        & 272 & NA  & 0.9521 & 9.06  & $1.9\times10^{-3}$ \\
\hline
\end{tabular}
\end{table}

\subsection{Sensitivity to Random Feature Dimension}
Tables~\ref{tab:eurusd_rff} and \ref{tab:electricity_rff} isolate the effect of the random feature dimension $D$ on ABO. Across both datasets, increasing $D$ from 2048 to 8192 yields modest improvements in predictive accuracy, consistent with RFF kernel approximation theory. However, this gain results in an approximate $6\times$ increase in end-to-end wall-clock time, reflecting the $O(ND)$ per-update computational complexity of random-feature-based adaptive filtering.

\begin{table}[H]
\centering
\caption{Effect of $D$ on ABO (EUR/USD, $L=20, W=21, \sigma=8.0$).}
\label{tab:eurusd_rff}
\begin{tabular}{lcccc}
\hline
\textbf{$D$} & \textbf{ResMSE} & \textbf{ResVAR} & \textbf{$T_{\text{e2e}}$ (ms)} & \textbf{Time ($\times$2048)} \\
\hline
2048 & 1.3711 & 1339.80 & 9.11  & 1.00 \\
8192 & 1.3587 & 1331.63 & 55.32 & 6.08 \\
\hline
\end{tabular}
\end{table}

\begin{table}[H]
\centering
\caption{Effect of $D$ on ABO (Electricity, $L=20, W=21, \sigma=6.5$).}
\label{tab:electricity_rff}
\begin{tabular}{lcccc}
\hline
\textbf{$D$} & \textbf{ResMSE} & \textbf{ResVAR} & \textbf{$T_{\text{e2e}}$ (ms)} & \textbf{Time ($\times$2048)} \\
\hline
2048 & 1.0189 & 10.86 & 8.52  & 1.00 \\
8192 & 1.0096 & 10.92 & 52.29 & 6.13 \\
\hline
\end{tabular}
\end{table}

\section{Discussion}
\label{sec:discussion}

The experimental results validate that the ABO framework effectively extends the "benign overfitting" phenomenon to recursive, non-stationary settings. The following subsections analyze the mathematical mechanisms driving these observations.

\subsection{Implicit Regularization and the Minimum-Norm Solution}
The "double-descent" behavior observed in our RFF experiments confirms that the QR-EWRLS update performs an implicit regularization. As established in Remark~\ref{rem:benign_overfitting}, the estimator converges to the unique minimum-$\ell_2$-norm solution when $D > N$. This property is critical in non-stationary environments; it ensures that variance is distributed across the "tail" of the feature spectrum. Consequently, high-frequency RFF components are only activated if they significantly contribute to reducing the residual, effectively allowing the model to interpolate the signal while remaining robust to noise—the hallmark of benign overfitting.

\subsection{Numerical Stability in the Underdetermined Regime}
A central challenge in overparameterized RLS is the singularity of the Gram matrix. In standard covariance-form RLS, the update $P_t$ becomes ill-conditioned as $D > N$, leading to catastrophic divergence. By utilizing orthogonal-triangular updates via Givens rotations, ABO maintains the factor $R_t$ directly. Even when the system is heavily underdetermined, the condition number of the $R$ factor remains bounded. The Moore-Penrose pseudoinverse update via the Greville/Cline method ensures that the weight vector $\beta_t$ remains stable and avoids the null-space amplification typical of standard inversion.

\subsection{Tracking vs. Overparameterization}
In non-stationary environments, an inherent trade-off exists between the forgetting factor $\lambda$ and the feature dimension $D$. Our results indicate that overparameterization does not degrade tracking speed; rather, the increased degrees of freedom allow the filter to react more fluidly to regime shifts. The $O(ND)$ complexity of the QR-update ensures that this increased capacity does not come at a prohibitive computational cost, allowing for real-time adaptation in high-dimensional spaces.

\subsection{Parallel Scalability and Ensemble Averaging}
The decoupled nature of the RFF generation allows for a natural extension to distributed architectures. Since each feature component is independent prior to the RLS update, the feature space can be partitioned across multiple processing cores. Training sub-models on independent cores and subsequently averaging the estimators—akin to online bootstrap aggregating (bagging)—provides a secondary layer of variance reduction. This distributed approach not only reduces per-update latency but also stabilizes the estimator against heavy-tailed noise and outliers common in empirical time-series.

\section{Conclusion}
\label{sec:conclusion}

This paper introduced \textit{Adaptive Benign Overfitting} (ABO), a framework that bridges the gap between modern overparameterization theory and classical adaptive filter design. By extending the Recursive Least Squares (RLS) algorithm into the $D \gg N$ regime using Random Fourier Feature (RFF) mappings, we demonstrated that "benign overfitting"—traditionally analyzed in batch settings—can be effectively harnessed for online learning in non-stationary environments.

The primary technical contribution is the development of a QR-based exponentially weighted RLS (QR-EWRLS) update mechanism. Unlike standard covariance-form RLS, which is prone to numerical instability in rank-deficient scenarios, our orthogonal-triangular formulation ensures numerical integrity through a recursive update of the Moore-Penrose pseudoinverse. We empirically validated that this approach reproduces the "double descent" generalization curve while maintaining stable condition numbers, even as the feature dimension significantly exceeds the number of observations.

Furthermore, the decoupled nature of the RFF mappings suggests significant potential for computational acceleration. Future work will investigate parallelized ensemble architectures, where sub-segments of the high-dimensional feature space are processed across independent cores and aggregated via bootstrap aggregating (bagging) or variance-reduction averaging. Such a distributed approach would not only further reduce the per-update latency but also potentially enhance the robustness of the estimator in the presence of heavy-tailed noise. 


%

\newpage

\appendices
\section{Rank-One Updates and Downdates}
\label{appendix:sherman_morrison_version}

Before arriving at the final QR--EWRLS formulation, an initial attempt was made using the classical covariance-form Recursive Least Squares (RLS) architecture, where the Gram matrix is defined as \( A = X^\top X \). 
While this formulation is conceptually straightforward, it is well known to suffer from numerical instability due to the ill-conditioning of \( A \) under long sequences or near-collinear regressors.

To handle both full-rank and rank-deficient regimes, we employed a unified recursive strategy: the generalized matrix-sum identities of Campbell and Meyer~\cite{cami1979generalized} were used for singular matrices, while the Sherman--Morrison formula provided the corresponding rank-one update for non-singular cases. 
This formulation enables explicit treatment of both regimes and serves as a theoretical stepping stone toward the stable QR-based approach presented later.

\subsection{Sherman--Morrison Update and Downdate Formulas}

To maintain a fixed-length rolling window, it is necessary to remove the contribution of the oldest sample \( x_{t-N} \). 
In the non-singular case, the Sherman--Morrison formula provides the well-known rank-one update of the inverse matrix:
\begin{equation*}
(A_{t-1} + x_t x_t^\top)^{-1}
= A_{t-1}^{-1}
- \frac{A_{t-1}^{-1} x_t x_t^\top A_{t-1}^{-1}}{1 + x_t^\top A_{t-1}^{-1} x_t}.
\end{equation*}

Similarly, applying the inverse rank-one downdate yields
\begin{equation*}
(A_{t-1} - x_{t-N} x_{t-N}^\top)^{-1}
= A_{t-1}^{-1}
+ \frac{A_{t-1}^{-1} x_{t-N} x_{t-N}^\top A_{t-1}^{-1}}{1 - x_{t-N}^\top A_{t-1}^{-1} x_{t-N}}.
\end{equation*}

\subsection{Rank-increasing Update and Rank-decreasing Downdate for the Singular Case}

When \( A_{t-1} \) becomes singular, the classical inverse update no longer applies. 
In this case, the generalized inverse identity of Campbell and Meyer can be used:
\begin{equation*}
c_t = x_t^\top (I - A_{t-1}^\dagger A_{t-1})
\end{equation*}
leading to the rank-increasing Sherman--Morrison-type update formula for the pseudoinverse:
\begin{align*}
(A_{t-1} + x_t x_t^\top)^\dagger
&= A_{t-1}^\dagger - A_{t-1}^\dagger x_t (c_t^\dagger)^\top 
- c_t^\dagger x_t^\top A_{t-1}^\dagger
\\
& \hspace{.5cm}+ \frac{c_t^\dagger (c_t^\dagger)^\top}{
1 + x_t^\top A_{t-1}^{\dagger} x_t}
\end{align*}

For the downdate, we define auxiliary quantities
\begin{align*}
h &= x_{t-N}^\top A_{t-1}^\dagger, 
&k &= A_{t-1}^\dagger x_{t-N}, \\
u &= (I - A_{t-1} A_{t-1}^\dagger) x_{t-N},
&v &= x_{t-N}^\top (I - A_{t-1}^\dagger A_{t-1}),
\end{align*}
leading to the rank-decreasing Sherman--Morison-type downdate formula for the pseudoinverse:
\begin{align*}
(A_{t-1} - x_{t-N} x_{t-N}^\top)^\dagger
&= A_{t-1}^\dagger - k k^\dagger A_{t-1}^\dagger
- A_{t-1}^\dagger h^\dagger h
\\
& \hspace{.5cm} + (k^\dagger A_{t-1}^\dagger h^\dagger) k h
\end{align*}

Although this formulation correctly handles both rank-deficient and full-rank regimes, its practical implementation proved highly sensitive to round-off errors and conditioning, particularly in long sequences. 
Staub~\cite{STAUB2021125996} reported similar instability when recursively updating the least-squares pseudoinverse
\[
A^\dagger = C^\top (C C^\top)^{-1} (B B^\top)^{-1} B^\top
\]
where the explicit use of Gram matrices in the rank decomposition amplified numerical errors.
These limitations ultimately motivated the adoption of a QR-based orthogonal--triangular update in the classical and interpolation regimes, which is instead an  orthogonal--trapezoidal update in the non-classical regime. We describe  this in Section~\ref{sec:system_model}. The QR updating method ensures numerical stability by preserving orthogonality and avoiding explicit matrix inversion.
\section{Proof of the Forgetting Factor Scaling}
\label{appendix:forgetting_factor_proof}

We show that scaling \(A_{t-1}^{\dagger}\) by the forgetting factor \(\lambda^{-1/2}\) prior to the update yields the correct recursive computation of \(A_t^{\dagger}\). The proof follows by expanding \(A_t^{\dagger} A_t\) and exploiting the orthogonality of the residual component \(c_t\).

\begin{align*}
    A_t &=
    \begin{bmatrix}
        \sqrt{\lambda}A_{t-1}\\[0.3em] 
        a_t
    \end{bmatrix}, &
    A_t^{\dagger} &=
    \begin{bmatrix}
        B_{t-1} & b_t
    \end{bmatrix}
    \label{eq:A_t_blocks}
\end{align*}

Expanding the product \(A_t^{\dagger} A_t\):
\begin{align*}
    A_t^{\dagger} A_t &=  \sqrt{\lambda} B_{t-1} A_{t-1} + b_t a_t, \\
    A_t^{\dagger} A_t A_{t-1}^{\dagger} &=  \sqrt{\lambda} B_{t-1} A_{t-1} A_{t-1}^{\dagger} + b_t a_t A_{t-1}^{\dagger}
\end{align*}

From this we obtain:
\begin{align*}
    A_{t-1}^{\dagger} &= \sqrt{\lambda} B_{t-1} + b_t a_t A_{t-1}^{\dagger} \\
    B_{t-1} &= \frac{1}{\sqrt{\lambda}}\bigl(A_{t-1}^{\dagger} - b_t a_t A_{t-1}^{\dagger}\bigr)
\end{align*}

Substituting into \(A_t^{\dagger}\) gives:
\begin{equation}
    A_t^{\dagger} =
    \begin{bmatrix}
        \frac{1}{\sqrt{\lambda}}\bigl(A_{t-1}^{\dagger} - b_t a_t A_{t-1}^{\dagger}\bigr) & b_t
    \end{bmatrix}
    \label{eq:A_tdagger_prelim}
\end{equation}

Define \(c_t\) as the component of \(a_t\) orthogonal to the column space of \(A_{t-1}\):
\begin{align}
    c_t &= a_t - a_t A_{t-1}^{\dagger} A_{t-1}, \\
    c_t &= a_t (I - A_{t-1}^{\dagger} A_{t-1}). \label{eq:c_t}
\end{align}

Substituting \eqref{eq:c_t} into \eqref{eq:A_tdagger_prelim} yields:
\begin{align*}
    A_t^{\dagger} A_t &=  A_{t-1}^{\dagger} A_{t-1} - b_t a_t A_{t-1}^{\dagger} A_{t-1} + b_t a_t \\
    A_t^{\dagger} A_t A_{t-1}^{\dagger} &=  A_{t-1}^{\dagger} A_{t-1} A_{t-1}^{\dagger} + b_t c_t A_{t-1}^{\dagger}
\end{align*}

Thus,
\begin{align*}
    A_{t-1}^{\dagger} &=  A_{t-1}^{\dagger} + b_t c_t A_{t-1}^{\dagger} \\
    c_t A_{t-1}^{\dagger} &= 0
\end{align*}

This demonstrates that \(c_t\) is orthogonal to the row space of \(A_{t-1}^{\dagger}\), and therefore to the column space of \(A_{t-1}\).  
Since \(c_t^{\dagger} = \frac{c_t^\top}{\|c_t\|^2}\) is a scalar multiple of \(c_t^\top\), we have \(c_t^{\dagger} c_t = 1\).

\begin{align*}
    c_t c_t^{\dagger} &= a_t c_t^{\dagger} - a_t A_{t-1}^{\dagger} A_{t-1} c_t^{\dagger} \\
    c_t^{\dagger} c_t &= a_t c_t^{\dagger} \\
    a_t c_t^{\dagger} &= 1
\end{align*}

Define:
\begin{align*}
    P_t &= A_{t-1}^{\dagger} A_{t-1} + c_t^{\dagger} c_t
\end{align*}

Then:
\begin{align*}
    a_t P_t &= a_t A_{t-1}^{\dagger} A_{t-1} + a_t c_t^{\dagger} c_t \\
    a_t P_t        &= a_t A_{t-1}^{\dagger} A_{t-1} + c_t \\
    a_t P_t        &= a_t
\end{align*}

Moreover,
\begin{equation*}
    A_{t-1} P_t = A_{t-1}
\end{equation*}
so \(P_t\) acts as a right identity for \(A_t\):
\begin{equation*}
    P_t = A_t^{\dagger} A_t
\end{equation*}

Now,
\begin{align*}
    c_t^\dagger c_t &= P_t - A_{t-1}^{\dagger} A_{t-1} \\ c_t^\dagger c_t &= A_{t-1}^{\dagger} A_{t-1} - b_t a_t A_{t-1}^{\dagger} A_{t-1} \\  & \hspace{.5cm}+ b_t a_t - A_{t-1}^{\dagger} A_{t-1} \\ c_t^\dagger c_t &= b_t c_t \\ c_t^\dagger &= b_t
\end{align*}

Substituting back gives:
\begin{equation*}
    A_t^{\dagger} =
    \begin{bmatrix}
        \frac{1}{\sqrt{\lambda}}\bigl(A_{t-1}^{\dagger} - c_t^{\dagger} a_t A_{t-1}^{\dagger}\bigr) & c_t^{\dagger}
    \end{bmatrix}
    \label{eq:A_tdagger_final}
\end{equation*}

\subsubsection*{Case: \(c_t = 0\)}
When \(c_t = 0\), \(a_t = a_t A_{t-1}^{\dagger} A_{t-1}\), meaning \(a_t\) lies in the row space of \(A_{t-1}\).  
Let \(d_t = a_t A_{t-1}^{\dagger}\) and define:
\begin{align*}
    T = A_{t-1} A_{t-1}^{\dagger} - A_{t-1} b_t d_t.
\end{align*}

As \(T\) is symmetric, \(A_{t-1} b_t d_t\) must also be symmetric. Thus, \(A_{t-1} b_t = h d_t^{\top}\) for some scalar \(h\).

Now we have 
\begin{equation*} A_t A_t^{\dagger} = \begin{bmatrix} A_{t-1} A_{t-1}^{\dagger} - hd_t^\top d_t & \frac{1}{\sqrt{\lambda}} (d_t - h d_t d_t^\top d_t) \\ \sqrt{\lambda} h d_t^\top & h d_t^\top d_t \end{bmatrix} \end{equation*}

From symmetry in \(A_t A_t^{\dagger}\), we obtain:

\begin{align*}
    \sqrt{\lambda}h (d_t^\top)^\top &= \frac{1}{\sqrt{\lambda}}(d_t - h d_t d_t^\top d_t) \\ d_t &= h d_t d_t^\top d_t + \lambda h d_t \\ d_t &= \lambda h (\frac{1}{\lambda}d_t d_t^\top d_t + d_t) \\ h &= \frac{\frac{1}{ \lambda}}{1 +\frac{1}{\lambda}d_t d_t^\top}
\end{align*}

Hence,
\begin{align*}
    b_t &= \frac{\frac{1}{\lambda} A_{t-1}^{\dagger} d_t^{\top}}{1 + \frac{1}{\lambda} d_t d_t^{\top}} \\[0.3em]
    A_t^{\dagger} &= 
    \begin{bmatrix}
        \frac{1}{\sqrt{\lambda}}
        \left(
            A_{t-1}^{\dagger} -
            \frac{\frac{1}{\lambda} A_{t-1}^{\dagger} d_t^{\top} d_t}
                 {1 + \frac{1}{\lambda} d_t d_t^{\top}}
        \right)
        &
        \frac{\frac{1}{\lambda} A_{t-1}^{\dagger} d_t^{\top}}
             {1 + \frac{1}{\lambda} d_t d_t^{\top}}
    \end{bmatrix}
\end{align*}

\noindent
This proves that pre-scaling \(R_{t-1}^{\dagger}\) by the forgetting factor before the update yields the correct recursive form of \(R_t^{\dagger}\).

\begin{lemma}[Weighted Downdate Consistency]
\label{lem:weighted_downdate}
Let $R_t$ denote the upper-trapezoidal factor obtained at time $t$ after the update step
but before the downdate, corresponding to the exponentially weighted observations
$\{z_{t-i}\}_{i=0}^{N}$.
Then the application of the downdate rotation $G_{t-N}$ followed by the pseudoinverse
downdate step computes the Moore--Penrose pseudoinverse of the matrix obtained by
removing the weighted observation $\sqrt{\lambda^{N}}\, z_{t-N}$ from the sliding window.
\end{lemma}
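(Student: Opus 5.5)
We track the exponential weights through the factorization and then reduce to the unweighted downdate already derived in Section~\ref{sec:downdate}. After the update step at time $t$, the scaling in Step~2 together with \eqref{eq:qr_recursive} gives
\[
\bar\Lambda_t \bar Z_t = \bar Q_t R_t, \qquad \bar Z_t = \begin{bmatrix} z_{t-N}^\top \\ \vdots \\ z_t^\top\end{bmatrix}, \qquad \bar\Lambda_t=\operatorname{diag}(\sqrt{\lambda^{N}},\dots,\sqrt{\lambda^{0}}).
\]
The first step is to verify this by induction on the number of updates. Each update multiplies the existing rows by $\sqrt{\lambda}$ and appends a row with weight $1$. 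Hence the row of the stacked weighted design matrix corresponding to the oldest observation is exactly $\sqrt{\lambda^{N}}\,z_{t-N}^\top$. By Appendix~\ref{appendix:forgetting_factor_proof}, the pseudoinverse carried alongside is exactly $R_t^\dagger$ for this weighted factor.

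The second step handles the rotation. We choose $G_{t-N}$ as in Section~\ref{sec:downdate} so that $G_{t-N}^\top \bar q = \alpha e_1$, where $\bar q^\top$ is the first row of $\bar Q_t$. Since $\bar Q_t^\top\bar\Lambda_t\bar Z_t=R_t$, we get
\[
G_{t-N}^\top R_t = G_{t-N}^\top \bar Q_t^\top \bar\Lambda_t \bar Z_t.
\]
Its first row is $\bar q^\top G_{t-N}$ applied to $\bar\Lambda_t\bar Z_t$, up to the sign $\alpha$, which is absorbed. This row equals $e_1^\top \bar\Lambda_t\bar Z_t=\sqrt{\lambda^{N}}z_{t-N}^\top$, and the remaining block is $\tilde Q^\top$ times the rest of the weighted matrix. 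So the whole argument of Section~\ref{sec:downdate} goes through with $z_{t-N}$ replaced by the scaled vector $\tilde z:=\sqrt{\lambda^{N}}z_{t-N}$.

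The third step is the pseudoinverse downdate itself. Set $\hat R=R_t-G_{t-N}e_1\tilde z^\top$. Then $G_{t-N}^\top\hat R=\begin{bmatrix}0\\ R'\end{bmatrix}$, where $R'$ is the triangular factor of the weighted window with the oldest row deleted. Using $(GA)^\dagger=A^\dagger G^\top$, we obtain
\[
\hat R^\dagger G_{t-N}=\begin{bmatrix}0 & R'^\dagger\end{bmatrix},
\]
using also that the pseudoinverse of a matrix with a zero top row is $[0\ \ R'^\dagger]$. It remains to show that formulas \eqref{eq:pseudoinv_downdate} and \eqref{eq:inv_downdate}, with $h=\tilde z^\top R_t^\dagger$ and $k=R_t^\dagger G_{t-N}e_1$, compute $\hat R^\dagger$. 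We will invoke the Campbell--Meyer / Cline rank-one sum formulas.

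\textbf{Main obstacle.} The hard part is verifying the range hypotheses with the scaled quantities, namely $G_{t-N}e_1\in\operatorname{Range}(R_t)$ and $\tilde z\in\operatorname{Range}(R_t^\top)$. The second holds because $\tilde z^\top$ is a row of $G_{t-N}^\top R_t$. For the first, $G_{t-N}e_1=\bar Q_t^\top e_1\cdot\alpha$ lies in $\operatorname{Range}(\bar Q_t^\top\bar\Lambda_t\bar Z_t)$ only if $e_1\in\operatorname{Range}(\bar\Lambda_t\bar Z_t)$. This is where the admissibility (rank-preserving) assumption of Theorem~\ref{thm:abo_consistency} must be invoked. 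We would check that the scalar $1-\tilde z^\top k$ (respectively the normalization in the singular case) is then well defined. Scaling by $\sqrt{\lambda^N}$ enters only through $h$ and the denominator, which is consistent with Appendix~\ref{appendix:forgetting_factor_proof}.
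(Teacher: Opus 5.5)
Your first two steps are correct, and they already cover everything the paper's proof uses. The paper only notes that the isolated first row of $G_{t-N}^\top R_t$ is $\sqrt{\lambda^N}z_{t-N}^\top$, and that subtracting its outer product from $R_t^\top R_t$ leaves $\sum_{i=0}^{N-1}\lambda^i z_{t-i}z_{t-i}^\top$. It then takes the validity of the downdate formulas for granted.

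Your third step tries to verify those formulas, and your plan for the ``main obstacle'' does not work. Put $M:=G_{t-N}^\top R_t=\begin{bmatrix}\tilde z^\top\\ R'\end{bmatrix}$. Then $G_{t-N}e_1\in\operatorname{Range}(R_t)$ iff $e_1\in\operatorname{Range}(M)$, iff some $\gamma$ has $\tilde z^\top\gamma=1$ and $R'\gamma=0$, iff $\tilde z\notin\operatorname{Range}(R'^\top)$, i.e.\ iff $\operatorname{rank}R'=\operatorname{rank}M-1$. So the range hypothesis you need is equivalent to the deletion being rank-\emph{decreasing}. A rank-preserving admissibility assumption cannot supply it; it excludes it.

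The scalar behaves the same way. Since $\tilde z^\top=e_1^\top M$, you get $1-\tilde z^\top k=e_1^\top(I-MM^\dagger)e_1=\|(I-MM^\dagger)e_1\|^2$, which vanishes exactly when the range condition holds. Thus \eqref{eq:inv_downdate} is undefined precisely where \eqref{eq:pseudoinv_downdate} applies. It is well defined only when $u=(I-R_tR_t^\dagger)G_{t-N}e_1=G_{t-N}(I-MM^\dagger)e_1\neq0$.

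The argument therefore has to be split by regime.

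For $D>N$, the right hypothesis is $z_{t-N}\notin\operatorname{span}\{z_{t-N+1},\dots,z_t\}$. This is automatic if the $(N+1)\times D$ weighted window has full row rank, so that $MM^\dagger=I$. Then $u=0$, $v=0$ and $\beta=1-\tilde z^\top k=0$ all hold. That is exactly the Campbell--Meyer case that yields \eqref{eq:pseudoinv_downdate}, so the formula does return $\hat R^\dagger$, and the rank drops from $N+1$ to $N$.

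For $D\le N$, assume the remaining rows have full column rank. Then the deletion is rank-preserving and $u\neq0$. Also, $R_t$ is $(N+1)\times D$, so the inverse in \eqref{eq:inv_downdate} can only mean $R_t^\dagger$. Set $S:=R'^\top R'=R_t^\top R_t-\tilde z\tilde z^\top$ and apply Sherman--Morrison to $S+\tilde z\tilde z^\top$. Right-multiplied by $G_{t-N}$, the formula then equals $S^{-1}M^\top=\begin{bmatrix}S^{-1}\tilde z & R'^\dagger\end{bmatrix}$, not $\begin{bmatrix}0 & R'^\dagger\end{bmatrix}$.

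So in this regime you cannot claim the formula computes $\hat R^\dagger$. What is true is that its trailing block equals $R'^\dagger$. That is what the lemma needs, since the algorithm keeps only that block, and it is proved through exactly the Gramian identity the paper uses.
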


\begin{proof}
At time $t$, the observation $z_{t-N}$ has undergone $N$ successive scaling operations
by a factor $\sqrt{\lambda}$.
Its effective contribution to the factor $R_t$ is therefore
\[
v = \lambda^{N/2} z_{t-N}
\]

The downdate rotation $G_{t-N}$ is constructed to isolate this contribution by aligning
the corresponding row with the canonical basis vector $e_1$.
In the implicit-$Q$ formulation, this yields
\[
G_{t-N}^\top R_t
=
\begin{bmatrix}
v^\top \\
\widetilde{R}_t
\end{bmatrix}
=
\begin{bmatrix}
\sqrt{\lambda^{N}}\, z_{t-N}^\top \\
\widetilde{R}_t
\end{bmatrix}
\]

Applying the generalized Moore--Penrose pseudoinverse downdate formula for rank-one
removals is algebraically equivalent to subtracting the outer product $v v^\top$ from
the corresponding Gramian:
\begin{align*}    
R_t^\top R_t - v v^\top
&=
\sum_{i=0}^{N} \lambda^{i} z_{t-i} z_{t-i}^\top
-
\lambda^{N} z_{t-N} z_{t-N}^\top
\\
&=
\sum_{i=0}^{N-1} \lambda^{i} z_{t-i} z_{t-i}^\top 
\end{align*}

This expression is precisely the Gramian associated with the exponentially weighted
sliding window $\{z_{t-i}\}_{i=0}^{N-1}$.
Hence, removing the isolated row of the rotated factor is algebraically equivalent to
removing the oldest weighted observation from the underlying least-squares objective.
\end{proof}

\ifCLASSOPTIONcompsoc
  \section*{Acknowledgments}
\else
  \section*{Acknowledgment}
\fi

The authors would like to thank Konstaninos Ntetsikas and Andrew Ching Hoe Lee for their diligent work on the group project which became the basis for the further refinements in this paper. Nick Firoozye would also like to thank Fauziah Ariff for her insightful comments and support during the work's lengthy gestation as well as during the preparation of this manuscript.

\bibliographystyle{plain}   
\bibliography{citation}   

%







\end{document}